\newcommand{\SD}{\mathcal{SD}}
\newcommand{\VD}{\mathrm{Vor}}
\newcommand{\T}{\mathcal{T}}
\newcommand{\BST}{RB-tree}
\newcommand{\PV}{K}
\newcommand{\deleted}[1]{}
\title{A Nearly Optimal Algorithm for the Geodesic Voronoi Diagram of Points in a Simple Polygon}
\titlerunning{A Nearly Optimal Algorithm for the Geodesic Voronoi Diagram in a Simple Polygon}
\author{Chih-Hung Liu}{Department of Computer Science, ETH Z\"{u}rich, Z\"{u}rich, Switzerland}{chih-hung.liu@inf.ethz.ch}{https://orcid.org/0000-0001-9683-5982}{}
\authorrunning{Chih-Hung Liu}
\subjclass{ Theory of computation →  Randomness, geometry and discrete structures}
\keywords{Simple polygons, Voronoi diagrams, geodesic distance}
\begin{document}

\maketitle

 \begin{abstract}
	The \emph{geodesic Voronoi diagram} of $m$ point sites inside a simple polygon of $n$ vertices
	is a subdivision of the polygon into $m$ cells, one to each site, such that all points in a cell share the same nearest site under the geodesic distance.
	The best known lower bound for the construction time is $\Omega(n+m\log m)$, and a matching upper bound is a long-standing open question. 
	The state-of-the-art construction algorithms achieve $O\big((n+m)\log (n+m)\big)$ and $O(n+m\log m\log^2n)$ time, which are optimal for $m=\Omega(n)$ and $m=O(\frac{n}{\log^3n})$, respectively.
	In this paper, we give a construction algorithm with $O\big(n+m(\log m+\log^2 n)\big)$ time, 
	and it is \emph{nearly optimal} in the sense that if a single Voronoi vertex can be computed in $O(\log n)$ time, then the construction time will become the optimal $O(n+m\log m)$.
	In other words, we reduce the problem of constructing the diagram in the optimal time to the problem of computing a single Voronoi vertex in $O(\log n)$ time. 
	
\end{abstract}

\section{Introduction}\label{sec-ind}

The \emph{geodesic Voronoi diagram} of $m$ point sites inside a simple polygon of $n$ vertices
is a subdivision of the polygon into $m$ \emph{cells}, one to each site, 
such that all points in a cell share the same \emph{nearest site} where the distance between two points is the length of the shortest path between them inside the polygon. 
The common boundary between two cells is a \emph{Voronoi edge},
and the endpoints of a Voronoi edge are \emph{Voronoi vertices}.
A cell can be augmented into \emph{subcells} such that all points in a subcell share the same \emph{anchor},
where the anchor of a point in the cell is the vertex of the shortest path from the point to the associated site that immediately precedes the point.
An anchor is either a point site or a \emph{reflex} polygon vertex. Figure~\ref{fig-GVD-PT}(a) illustrates an augmented diagram.

The size of the (augmented) diagram is $\Theta(n+m)$~\cite{Aronov1989}.
The best known construction time is $O\big((n+m)\log(n+m)\big)$~\cite{PapadopoulouL98} and $O(n+m\log m\log^2n)$~\cite{OhA17}.
They are optimal for $m=\Omega(n)$ and for $m=O(\frac{n}{\log^3n})$, respectively, since the best known lower bound is $\Omega(n+m\log m)$.
The existence of a matching upper bound is a long-standing open question by Mitchell~\cite{Mitchell00}.

Aronov~\cite{Aronov1989} first proved  fundamental properties: 
a bisector between two sites is a simple curve consisting  of $\Theta(n)$ straight and hyperbolic arcs and ending on the polygon boundary; 
the diagram has $\Theta(n+m)$ vertices, $\Theta(m)$ of which are Voronoi vertices.
Then, he developed a divide-and-conquer algorithm that recursively partitions the polygon into two roughly equal-size sub-polygons.
Since each recursion level takes $O\big((n+m)\log(n+m)\big)$ time to extend the diagrams between every pair of sub-polygons,
the total time is $O\big((n+m)\log(n+m)\log n\big)$.

Papadopoulou and Lee~\cite{PapadopoulouL98} combined the divide-and-conquer and plane-sweep paradigms to improve the construction time to $O\big((n+m)\log(n+m)\big)$. First, the polygon is triangulated and one resultant triangle is selected as the root such that the dual graph is a rooted binary tree and each pair of adjacent triangles have a parent-child relation; see Figure~\ref{fig-GVD-PT}(b).
For each triangle, its diagonal shared with its parent partitions the polygon into two sub-polygons: the ``\emph{lower}'' one contains it, and the ``\emph{upper}'' one contains its parent. 
Then, the triangles are swept by the post-order and pre-order traversals of the rooted tree to respectively build, inside each triangle, the two diagrams with respect to sites in its lower and upper sub-polygons. Finally, the two diagrams inside each triangle are merged into the final diagram.

Very recently, Oh and Ahn~\cite{OhA17} generalized the notion of plane sweep to a simple polygon.
To supplant the scan line,  one point is fixed on the polygon boundary, 
and another point is moved from the fixed point along the polygon boundary counterclockwise,
so that the shortest path between the two points will sweep the whole polygon.
Moreover, Guibas and Hershberger's data structure for shortest path queries~\cite{Guibas1989, Hershberger91} is extended to compute a Voronoi vertex among three sites or between two sites in $O(\log^2 n)$ time.
This technique enables handling an event in $O(\log m \log^2 n)$ time,
leading to a total time of $O(n+m\log m\log^2n)$.

Papadopoulou and Lee's method~\cite{PapadopoulouL98} has two issues inducing the $n\log(n+m)$ time-factor.
First, while sweeping the polygon,
an intermediate diagram is represented by a ``\emph{wavefront}''  in which a ``\emph{wavelet}'' is associated with a ``subcell.'' 
Although this representation enables computing the common vertex among three subcells in $O(1)$ time,
since it takes~$\Omega\big(\log (n+m)\big)$ time to update such a wavefront,
the $\Omega(n)$ vertices lead to the $n\log (n+m)$ factor.
Second, when a wavefront enters a triangle from one diagonal
and leaves from the other two diagonals, it will split into two. 
Since there are $\Omega(n)$ triangles, there are $\Omega(n)$ split events,
and since a split event takes $\Omega\big(\log (n+m)\big)$ time, the $n\log (n+m)$ factor arises again.

\subsection{Our contribution}\label{sub-ind-contribution}


We devise a construction algorithm with $O\big(n+m(\log m+\log^2 n)\big)$ time, which is slightly faster than Oh and Ahn's method~\cite{OhA17}
and is optimal for $m=O(\frac{n}{\log^2 n})$.
More importantly,  
our algorithm is, to some extent, \emph{nearly optimal} since the $\log^2 n$ factor solely comes from computing a single Voronoi vertex.
If the computation time can be improved to $O(\log n)$, the total construction time will become $O\big(n+m(\log m+\log n)\big)$, which equals the optimal $O(n+m\log m)$ since $m\log n=O(n)$ for $m=O(\frac{n}{\log n})$ and $\log n=O(\log m)$ for $m=\Omega(\frac{n}{\log n})$. 
In other words, we reduce the problem of constructing the diagram in the optimal time by Mitchell~\cite{Mitchell00}
to the problem of computing a single Voronoi vertex in $O(\log n)$ time.

At a high level, our algorithm is a new implementation of Papadopoulou and Lee's concept~\cite{PapadopoulouL98}
using a different data structure of a wavefront, symbolic maintenance of incomplete Voronoi edges, tailor-made wavefront operations, and appropriate amortized time analysis. 

First, in our wavefront, each wavelet is directly associated with a cell rather than a subcell.
This representation makes use of Oh and Ahn's~\cite{OhA17} $O(\log^2 n)$-time technique of computing a Voronoi vertex.
Each wavelet also stores the anchors of incomplete subcells in its associated cell in order to enable locating a point in a subcell along the wavefront.

Second, if each change of a wavefront needs to be updated immediately, 
a priority queue for events would be necessary, and since the diagram has $\Theta(m+n)$ vertices,  an $(m+n)\log (m+n)$ time-factor would be inevitable. To overcome this issue, 
we maintain incomplete Voronoi edges symbolically, and update them only when necessary. For example, 
during a binary search along a wavefront, each incomplete Voronoi edge of a tested wavelet will be updated.

Third, to avoid $\Omega(n)$ split operations, 
we design two tailor-made operations. If a wavefront will separate into two but one part will not be involved in the follow-up ``sweep'', then, 
instead of using a binary search, we ``divide'' the wavefront in a seemingly brute-force way in which we traverse the wavefront from the uninvolved part until the ``division'' point,
remove all visited subcells, and build another wavefront from those subcells.
If a wavefront propagates into a sub-polygon that contains no point site,
then we adopt a two-phase process to build the diagram inside the sub-polygon instead of splitting a wavefront many times.

Finally, when deleting or inserting a subcell (anchor), its position in a wavelet is  known.
Since re-balancing a \emph{red-black tree} (RB-tree) after an insertion or a deletion takes amortized $O(1)$ time~\cite{MehlhornS08,Tarjan83,Tarjan85}, 
by augmenting each tree node with pointers to its predecessor and successor, an insertion or a deletion with a known position takes amortized~$O(1)$ time.  


This paper is organized as follows. 
Section~\ref{sec-pre} formulates the geodesic Voronoi diagram,
defines a rooted partition tree, and introduces Papadopoulou and Lee's two subdivisions~\cite{PapadopoulouL98};
Section~\ref{sec-overview} summarizes our algorithm;
Section~\ref{sec-wavefront} designs the data structure of a wavefront; 
Section~\ref{sec-operation} presents wavefront operations; Section~\ref{sec-construction} implements the algorithm with those operations.

 \section{Preliminary}\label{sec-pre}

\subsection{Geodesic Voronoi diagrams}\label{sub-pre-GVD}

Let $P$ be a simple polygon of $n$ vertices, let $\partial P$ denote the boundary of $P$, and let $S$ be a set of $m$ point sites inside $P$. For any two points $p,q$ in $P$,
the \emph{geodesic distance} between them, denoted by $d(p, q)$,
is the length of the shortest path between them  that fully lies in $P$,
the \emph{anchor} of $q$ with respect to $p$
is the last vertex on the shortest path from $p$ to $q$ before $q$,
and the \emph{shortest path map} (SPM) from $p$ in $P$ is a subdivision of $P$ such that all points in a region share the same anchor with respect to $p$. 
Each edge in the SPM from $p$ is a line segment from a reflex polygon vertex $v$ of $P$ to $\partial P$
along the direction from the anchor of $v$ (with respect to $p$) to $v$, and this line segment is called the \emph{SPM edge} of $v$ (from $p$). 

The \emph{geodesic Voronoi diagram} of $S$ in $P$, denoted by $\VD_P(S)$, partitions $P$ into $m$ \emph{cells},
one to each site,
such that all points in a cell share the same nearest site in $S$ under the geodesic distance. 
The cell of a site $s$ can be augmented by partitioning
the cell with the SPM from $s$ into \emph{subcells} such that all points in a subcell share the same anchor with respect to $s$. The augmented version of $\VD_P(S)$ is denoted by $\VD_P^*(S)$.
With a slight abuse of terminology, a cell in $\VD_P^*(S)$ indicates a cell in $\VD_P(S)$ together with its subcells in $\VD_P^*(S)$. Then, each cell is associated with a site,
and each subcell is associated with an achor.
As shown in Figure~\ref{fig-GVD-PT}(a), $v$ is the anchor of the shaded subcell (in $s_1$'s cell), and the last vertex on the shortest path from $s_1$ to any point $x$ in the shaded subcell before $x$ is $v$.

\begin{figure}
	\includegraphics{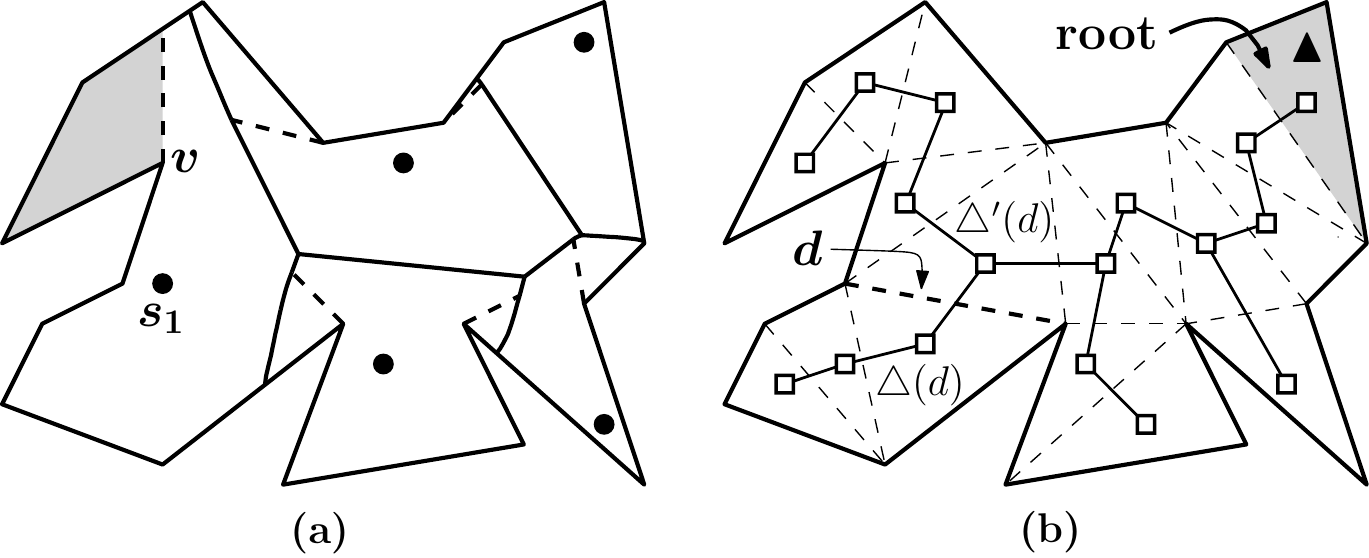}
	\caption{(a) Augmented geodesic Voronoi diagram $\VD_P^*(S)$. (b) Rooted partition tree $\T$.}\label{fig-GVD-PT}
\end{figure}

A \emph{Voronoi edge} is the common boundary between two adjacent cells,
and the endpoints of a Voronoi edge are called \emph{Voronoi vertices}. 
A Voronoi edge is a part of the \emph{geodesic bisector} between the two associated sites,
and consists of straight and hyperbolic arcs. Endpoints of these arcs except Voronoi vertices are called \emph{breakpoints}, and a breakpoint is incident to an SPM edge in the SPM from one of the two associated sites, indicating a change of the corresponding anchor.
There are $\Theta(m)$ Voronoi vertices and $\Theta(n)$ breakpoints~\cite{Aronov1989}.

In our algorithm, each anchor $u$ refers to either a reflex polygon vertex of $P$ or a point site in $S$;
we store its associated site $s$, its geodesic distance from $s$,
and its anchor with respect to $s$. The weighted distance from $u$ to a point $x$ 
is $d(s, u)+|\overline{ux}|$.

Throughout the paper, we make a general position assumption that no polygon vertex is equidistant from two sites in $S$ and no point inside $P$ is equidistant from four sites in $S$.
The former avoids nontrivial overlapping among cells \cite{Aronov1989}, and the latter ensures that the degree of each Voronoi vertex with respect to Voronoi edges is either 1 (on $\partial P$) or 3 (among three cells).

The boundary of a cell except Voronoi edges are polygonal chains on $\partial P$. For convenience,
these polygonal chains are referred to as \emph{polygonal edges} of the cell,
the incidence of an SPM edge onto a polygonal edge is also a \emph{breakpoint},
and the polygonal edges including their polygon vertices and breakpoints also count for the size of the cell.

\begin{lemma}(\cite[Lemma~5 and 14]{OhA17})\label{lem-Voronoi-vertex}
	It takes $O(\log^2 n)$ time to compute the degree-1 or degree-3 Voronoi vertex between two sites or among three sites after $O(n)$-time preprocessing.
\end{lemma}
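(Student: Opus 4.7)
The plan is to take Guibas--Hershberger's shortest-path-query structure as the preprocessing (it is built in $O(n)$ time, uses linear space, and answers any geodesic distance query $d(p,q)$ in $O(\log n)$ time), and then reduce the computation of a Voronoi vertex to a binary search over the combinatorial pieces of a geodesic bisector, with each step of the search resolved by a constant number of distance queries.

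\textbf{Structure used.} Recall from Aronov's work that the bisector $B_{12}$ between two sites $s_1,s_2$ is a simple curve made of $\Theta(n)$ straight and hyperbolic arcs, where the $k$-th arc is characterized by a unique pair of anchors $(a_1^k,a_2^k)$: for every point $x$ on that arc, $a_i^k$ is the anchor of $x$ with respect to $s_i$, and $x$ satisfies
\[
d(s_1,a_1^k)+|a_1^k x|=d(s_2,a_2^k)+|a_2^k x|.
\]
Once such a pair of anchors is identified, solving for the degree-1 vertex (intersection with $\partial P$) or for the degree-3 vertex (point additionally equidistant from $s_3$ using the corresponding anchor $a_3$) reduces to an $O(1)$-time algebraic computation, so the whole task becomes \emph{locating} the correct combinatorial piece.

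\textbf{Binary search.} Parameterize $B_{12}$ by arc-length (or by any natural monotone parameter along the curve from one end on $\partial P$ to the other). For the degree-1 case, track the signed distance from the current arc to $\partial P$ along the anchor-driven direction; for the degree-3 case, consider the function $\Phi(x)=d(s_1,x)-d(s_3,x)$ for $x\in B_{12}$. At any trial point $x$ on $B_{12}$, $\Phi(x)$ can be evaluated in $O(\log n)$ time via two geodesic-distance queries, and the Voronoi vertex is the unique zero of $\Phi$ (by the general-position assumption). Hence an $O(\log n)$-step binary search, each step costing $O(\log n)$, yields the vertex in $O(\log^2 n)$ time overall.

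\textbf{Main obstacle.} The bisector is not available as an explicit sorted list of $\Theta(n)$ arcs—producing such a list would already cost $\Omega(n)$ and defeat the bound. The crux is therefore showing that the ``$k$-th arc of $B_{12}$'' can be accessed in $O(\log n)$ time implicitly. This is done by exploiting the hourglass/funnel decomposition of the Guibas--Hershberger structure: the sequence of anchor-pairs along $B_{12}$ corresponds to interleaved traversals of the shortest-path trees rooted at $s_1$ and $s_2$, and the hourglass structure supports $O(\log n)$-time navigation and splitting. Once random access to the implicit arc sequence is in place, together with the monotonicity of the search predicate, the binary search argument above goes through and delivers the claimed $O(\log^2 n)$ bound for both the degree-1 and degree-3 cases.
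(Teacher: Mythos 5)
The paper does not prove this lemma: it is imported verbatim from Oh and Ahn \cite[Lemmas~5 and 14]{OhA17} and used as a black box, so there is no in-paper argument to compare your proposal against. Your reconstruction does identify the correct ingredients --- the Guibas--Hershberger structure as the $O(n)$-time preprocessing, $O(\log n)$-time geodesic-distance queries, and a binary search of $O(\log n)$ rounds giving $O(\log^2 n)$ total --- and this is indeed the overall shape of the argument in the cited work.

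However, two points are asserted where they need to be argued, and they are exactly what makes the cited lemma nontrivial. First, a binary search for the zero of $\Phi(x)=d(s_1,x)-d(s_3,x)$ along $B_{12}$ requires $\Phi$ to change sign at most once; ``$\Phi$ has a unique zero'' does not by itself license binary search, since a function can have a unique zero without ever changing sign. The fact you actually need is that two geodesic bisectors in a simple polygon intersect transversally at most once (Aronov), which under the general-position assumption yields the single sign change. Second --- and you flag this yourself as the ``main obstacle'' --- the $O(\log n)$-time implicit random access to the $k$-th arc of $B_{12}$ is precisely the technical heart of Oh and Ahn's lemma; saying that ``the hourglass/funnel decomposition supports $O(\log n)$-time navigation'' restates what must be shown rather than showing it, because the Guibas--Hershberger structure is organized around shortest paths between query points, not around the combinatorial pieces of a bisector. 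Your treatment of the degree-1 vertex is thinner still: that vertex is an endpoint of $B_{12}$ on $\partial P$, so the search domain there is the polygon boundary rather than the bisector interior, and your ``signed distance to $\partial P$'' predicate is not obviously monotone along the arcs. In short, the plan is sound in outline, but the load-bearing step is left as a sketch, and one monotonicity claim is stated in a form that does not by itself justify the search.
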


\subsection{A rooted partition tree}\label{sub-pre-triangulation}

Following Papadopoulou and Lee~\cite{PapadopoulouL98},
a rooted partition tree $\T$ for $P$ and  $S$ is built as follows:
First, $P$ is triangulated using Chazelle's algorithm \cite{Chazelle91} in $O(n)$ time,
and all sites in $S$ are located in the resulting triangles by Edelsbrunner et al's approach~\cite{EdelsbrunnerGS86} in $O(n+m\log n)$ time. 
The dual graph of the triangulation is a tree in which each node corresponds to a triangle
and an edge connects two nodes if and only if their corresponding triangles share a diagonal. 
Then, an arbitrary triangle $\blacktriangle$ whose two diagonals are polygon sides, i.e., a node with degree 1,
is selected as the \emph{root}, so that there is a parent-child relation between each pair of adjacent triangles. 
Figure~\ref{fig-GVD-PT}(b) illustrates a rooted partition tree $\T$.

For a diagonal $d$, let $\triangle(d)$ and $\triangle'(d)$ be the two triangles adjacent to $d$ such that $\triangle'(d)$ is the parent of $\triangle(d)$,
and call $d$ the \emph{root diagonal} of $\triangle(d)$; also see Figure~\ref{fig-GVD-PT}(b).
$d$ partitions $P$ into two sub-polygons:
$P(d)$ contains $\triangle(d)$
and $P'(d)$ contains $\triangle'(d)$.
$P(d)$ and $P'(d)$ are said to be ``below'' and ``above'' $d$, respectively.
Assume that $d\subseteq P(d)$; let $S(d)=S\cap P(d)$ and $S'(d)=S\setminus S(d)$, which indicate the two respective subsets of sites below and above $d$.

In this paper,  we adopt the following convention: for each triangle $\triangle$,
let $d=\overline{v_1v_2}$ be its root diagonal, let $\triangle'$ be its parent triangle,
let $d_1,d_2$ be the other two diagonals of $\triangle$, and let $d_3, d_4$ be the other two diagonals of $\triangle'$.
Assume that $d_4$ is the diagonal between $\triangle'$ and its parent, and that $d_1$, $d$, and $d_4$ are incident to $v_1$,
and  $d_2$, $d$, and $d_3$ are incident to $v_2$.
Let $v_{1,2}$ be the vertex shared by $d_1$ and $d_2$, and let $v_{3,4}$ be the vertex shared by $d_3$ and $d_4$.
Denote the set of sites in $\triangle$ as $S_\triangle=S(d)-S(d_1)-S(d_2)$. Figure~\ref{fig-Triangle-SD}(a) shows an illustration.

 \begin{figure}
	\includegraphics{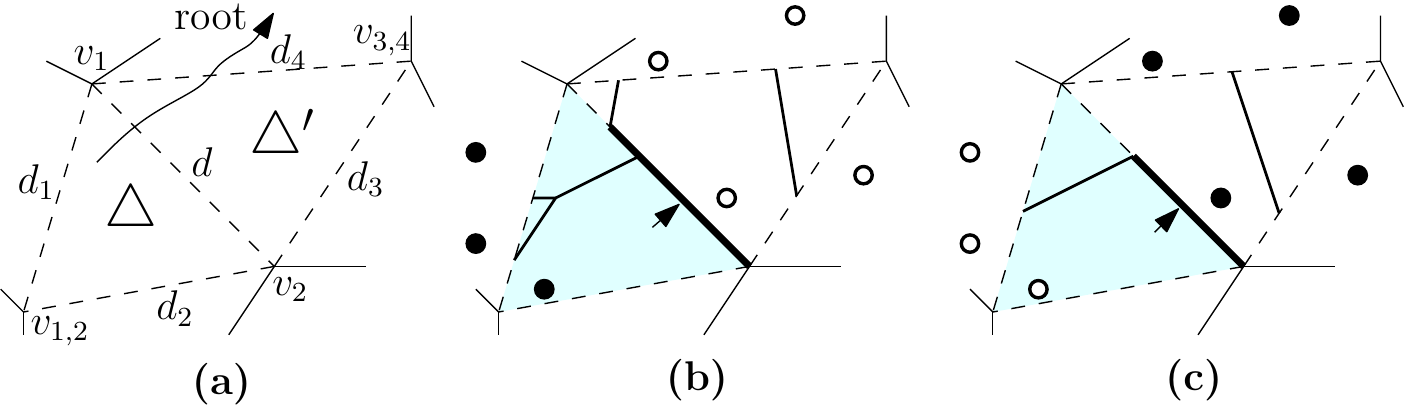}
	\caption{(a)  $\triangle$ and $\triangle'$. (b) $\SD\cap \triangle$. (c) $\SD' \cap \triangle$. (Borders on $d$ are indicated by arrows.)}\label{fig-Triangle-SD}
\end{figure}

\subsection{Subdivisions}\label{sub-pre-SD}

Papadopoulou and Lee~\cite{PapadopoulouL98} introduced two subdivisions, $\SD$ and $\SD'$, of $P$, which can be merged into $\VD_P^*(S)$.  
For each triangle $\triangle$ with a root diagonal $d$,
$\SD$ and $\SD'$ respectively contain $\VD_P^*\big(S(d)\big)\cap \triangle$ and $\VD_P^*\big(S'(d)\big)\cap \triangle$; $\SD$ also contains $\VD_P^*(S)\cap \blacktriangle$.
Since $S(d)$ and $S(d_4)$ (resp.~$S'(d)$ and $S'(d_4)$) may differ,
a \emph{border} forms along $d$ in $\SD$ (resp.~$\SD'$) to ``remedy'' the conflicting proximity information.
Figure~\ref{fig-Triangle-SD}(b)--(c) illustrate $\SD$ and $\SD'$.

The incidence of a Voronoi edge or an SPM edge in $\VD_P^*(S)$ onto a border in $\SD$ or $\SD'$ is called a \emph{border vertex},
and the border vertices partition a border into \emph{border edges}. Both $\SD$ and $\SD'$ have $O(n+m)$ border vertices \cite{PapadopoulouL98}, $O(m)$ of which are induced by Voronoi edges. 
Hereafter, a \emph{diagram vertex} means a Voronoi vertex, a breakpoint, a border vertex, or a polygon vertex.

\section{Overview of the algorithm}\label{sec-overview}

We compute $\VD_P^*(S)$ in the following three steps:
\begin{enumerate}
	\item Build the rooted partition tree $\T$ for $P$ and $S$ in $O(n+m\log n)$ time. (Section~\ref{sub-pre-triangulation})
	\item Construct $\SD$ and $\SD'$ in $O\big(n+m(\log m+\log^2 n)\big)$ time
	by sweeping the polygon using the post-order and pre-order traversals of $\T$,
	respectively. (Section~\ref{sub-postorder} and Section~\ref{sub-preorder})
	\item Merge $\SD$ and $\SD'$ into $\VD^*_P(S)$ in $O(n+m)$ time using Papadopoulou and Lee's method~\cite[Section 7]{PapadopoulouL98}.
\end{enumerate}

By the above-mentioned running times, we conclude the total running time as follows. 
\begin{theorem}\label{thm-total-time}
	$\VD_P^*(S)$ can be constructed in $O\big(n+m(\log m+\log^2 n )\big)$ time.
\end{theorem}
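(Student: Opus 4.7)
The proof is really an accounting of the three steps laid out in the overview, so the plan is to add together their running times and argue that the dominant term is $O\bigl(n+m(\log m+\log^2 n)\bigr)$. I would state that Step~1 (building the rooted partition tree $\T$) costs $O(n+m\log n)$, invoking Chazelle's triangulation in $O(n)$ time and the Edelsbrunner--Guibas--Stolfi point-location structure to locate the $m$ sites in $O(n+m\log n)$ time, as already recalled in Section~\ref{sub-pre-triangulation}. Step~3 is $O(n+m)$ by Papadopoulou and Lee~\cite{PapadopoulouL98}, which I would cite as a black box. Thus only Step~2 needs to be established in detail.

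For Step~2, the plan is to defer the actual construction arguments to Sections~\ref{sub-postorder} and~\ref{sub-preorder}, where $\SD$ and $\SD'$ are produced by sweeping $\T$ in post-order and pre-order respectively. Once those sections prove that each sweep runs in $O\bigl(n+m(\log m+\log^2 n)\bigr)$ time, the theorem follows by addition, since
\[
O(n+m\log n)+O\bigl(n+m(\log m+\log^2 n)\bigr)+O(n+m)=O\bigl(n+m(\log m+\log^2 n)\bigr),
\]
using $m\log n=O(m\log^2 n)$.

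The real work, and the main obstacle, is therefore Step~2. To justify its claimed bound at the level of this theorem, I would point to the four algorithmic ingredients sketched in Section~\ref{sub-ind-contribution}: (i) the wavefront in which each wavelet is associated with a full cell (not a subcell), so that a Voronoi vertex computation costs $O(\log^2 n)$ by Lemma~\ref{lem-Voronoi-vertex}, accounting for the $m\log^2 n$ term; (ii) the symbolic maintenance of incomplete Voronoi edges, which removes the need for a global priority queue over $\Theta(n+m)$ vertices and confines the $\log m$ factor to the $\Theta(m)$ Voronoi-vertex events; (iii) the two tailor-made divide/propagate operations that replace the $\Omega(n)$ split events of~\cite{PapadopoulouL98} by cheaper traversals, eliminating the $n\log(n+m)$ factor; and (iv) the augmented red-black tree representation, which reduces insertions/deletions with known positions to amortized $O(1)$. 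With these in hand, each of the $\Theta(m)$ genuine events costs $O(\log m+\log^2 n)$, each of the $\Theta(n)$ structural changes costs amortized $O(1)$, and the overall $O\bigl(n+m(\log m+\log^2 n)\bigr)$ bound for a single sweep follows.

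The hard part is of course not the summation itself but verifying that these four ingredients really combine to yield the per-sweep bound; that verification is the content of Sections~\ref{sec-wavefront}--\ref{sec-construction}, and the present theorem can safely be stated as their corollary once those sections are in place.
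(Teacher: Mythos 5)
Your proposal is correct and mirrors the paper's own treatment: the theorem is obtained by summing the stated times of the three overview steps, with Step~1 at $O(n+m\log n)$, Step~3 at $O(n+m)$ via Papadopoulou and Lee, and the substance of Step~2 deferred to the later sections (Theorems~\ref{thm-time-post} and~\ref{thm-time-pre}), exactly as you do. The extra paragraph recapping the four algorithmic ingredients is harmless context rather than a divergence from the paper's argument.
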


\section{Wavefront structure}\label{sec-wavefront}

A \emph{wavefront} represents the ``\emph{incomplete}'' boundary of ``\emph{incomplete}'' Voronoi cells during the execution of  our algorithm,
and wavefronts will ``\emph{sweep}'' the simple polygon $P$ triangle by triangle to construct $\SD$ and $\SD'$.
To avoid excessive updates, each ``\emph{incomplete}'' Voronoi edge, which is a part of a Voronoi edge and will be completed during the sweep, 
is maintained \emph{symbolically},
preventing an extra $\log n$ time-factor. 
During the sweep, candidates for Voronoi vertices in $\SD$ and $\SD'$ called \emph{potential vertices} 
will be generated in the \emph{unswept} part of $P$.

\subsection{Formal definition and data structure}\label{sub-wavefront-DS}

Let $\eta$ be a diagonal or a pair of diagonals sharing a common polygon vertex,
and let $S'$ be a subset of $S$ lying on the same side of $\eta$.
A \emph{wavefront} $W_{\eta}(S')$ represents the sequence of Voronoi cells in $\VD_P^*(S')$ appearing along $\eta$,
and each appearance of a cell induces a \emph{wavelet} in $W_{\eta}(S')$.
The \emph{unswept} area of $W_{\eta}(S')$ is the part of $P$ on the opposite side of $\eta$ from $S'$.
Since $\VD_P^*(S')$ in the unswept area has not yet been constructed, Voronoi and polygonal edges incident to $\eta$ are called \emph{incomplete}. 
Each wavelet is bounded by two \emph{incomplete} Voronoi or polygonal edges along $\eta$,
and its \emph{incomplete boundary} comprises its two incomplete edges and the portion of $\eta$ between them. 
When the context is clear, a wavelet may indicate its associated cell.

$W_{\eta}(S')$ is stored in an \BST{} in which
each node refers to one wavelet 
and the ordering of nodes follows their appearances along $\eta$.
The \BST{} is augmented such that each node has pointers to its predecessor and successor,
and the root has pointers to the first and last nodes, enabling efficiently traversing wavelets along $\eta$
and accessing the two ending wavelets. 

The subcells of a wavelet are the subcells in its associated cell incident to its incomplete boundary. The list of their anchors is also maintained by an augmented \BST{} in which their ordering follows their appearances along the incomplete boundary. Due to the visibility of a subcell, each subcell appears exactly once along the incomplete boundary.
Since the rebalancing after an insertion or a deletion takes amortized $O(1)$ time~\cite{Tarjan83,Tarjan85,MehlhornS08},
inserting or deleting an anchor at a known position in an \BST{}, i.e., without searching, takes amortized $O(1)$ time. 

\subsection{Incomplete Voronoi and polygonal edges}\label{sub-wavefront-symbolic}

When a wavefront moves into its \emph{unswept} area,
incomplete Voronoi edges will extend, generating new breakpoints.
If each breakpoint needs to be created immediately,
all candidates for breakpoints should be maintained in a priority queue, leading to an $\Omega(n\log n)$ running time due to $\Omega(n)$ breakpoints. 
To avoid these excessive updates,
we maintain each incomplete Voronoi edge \emph{symbolically},
and update it only when necessary.  
For example, when searching along the wavefront or merging two wavefronts,
the incomplete Voronoi edges of each involved wavelet (cell) will be updated until the diagonal or the pair of diagonals.

Since a breakpoint indicates the change of a corresponding anchor, 
for a Voronoi edge,
if the anchors of its incident subcells on ``each side'' are stored in a sequence,
the Voronoi edge can be computed in time proportional to the number of breakpoints by scanning the two sequences \cite[Section 4]{OhA17}.
Following this concept, 
for each incomplete Voronoi edge,
we store its fixed Voronoi vertex (in the swept area), its last created breakpoint, and its last used anchors on its two sides, 
so that we can update an incomplete Voronoi edge by scanning the two lists of anchors from the last used anchors.
When creating a breakpoint, 
we also build a corresponding SPM edge, and then remove the corresponding anchor from the anchor list.

Each polygonal edge is also maintained symbolically in a similar way; in particular, it will also be  updated when a polygon vertex is inserted as an anchor. 
Meanwhile,  
the SPM edges incident to a polygonal edge will also be created using its corresponding anchor list.

\subsection{Potential vertices}\label{sub-wavefront-potential}

We process incomplete Voronoi edges to generate candidates for Voronoi vertices called \emph{potential vertices}. 
For each incomplete Voronoi edge, since its two associated sites lie in the swept area,
one endpoint of the corresponding bisector lies in the unswept area and is a degree-1 potential vertex.
For each two adjacent Voronoi edges along the wavefront,
their respective bisectors may intersect in the unswept area,
and the intersection is a degree-3 potential vertex.
By Lemma~\ref{lem-Voronoi-vertex}, a potential vertex can be computed in $O(\log^2 n)$ time.

Potential vertices are stored in their located triangles;
each diagonal of a triangle is equipped with a priority queue to store the potential vertices in the triangle associated with sites on the opposite side of the diagonal, where the key is the distance to the diagonal.

\section{Wavefront operations}\label{sec-operation}

We summarize the eight wavefront operations, where $\PV_{\mbox{inv}}$, $A_{\mbox{vis}}$ and $I_{\mbox{new}}$ are the numbers of involved (i.e., processed and newly created) potential vertices, visited anchors, and created diagram vertices, respectively:
\begin{description}
	\item[Initiate:] Compute $\VD_\triangle(S_\triangle)$ and initialize $W_{(d,d_2)}(S_\triangle)$ in $O\big(|S_\triangle|(\log m+\log^2 n)\big)$ time.
	\item[Extend:] Extend one wavefront into a triangle from one diagonal to the opposite pair of diagonals to build the diagram inside the triangle in  $O\big(\PV_{\mbox{inv}}(\log m+\log^2 n)+I_{\mbox{new}}\big)$ plus amortized $O(1)$ time.
	\item[Merge:] Merge two wavefronts sharing the same diagonal or the same pair of diagonals together with merging the two corresponding diagrams in $O\big(|S_\triangle|(\log m+\log n)+\PV_{\mbox{inv}}(\log m+\log^2 n)+I_{\mbox{new}}\big)$ plus amortized $O(1)$ time
	where $\triangle$ is the underlying triangle. 
	\item[Join:] Join two wavefronts sharing the same diagonal with building the border on the diagonal but without merging the two corresponding diagrams inside the underlying triangle $\triangle'$ in $O\big(|S_{\triangle'}|(\log m+\log n)+\PV_{\mbox{inv}}(\log m+\log^2 n)+I_{\mbox{new}}\big)$ plus amortized $O(1)$ time.
	\item[Split:] Split a wavefront using a binary search in $O(\log m+\log n)$ time. 
	\item[Divide:] Divide a wavefront by traversing one diagonal in amortized $O(A_{\mbox{vis}}+1)$ time.
	\item[Insert:] Insert $S_\triangle$ into $W_{d_1}\big(S(d_2)\big)$ to be $W_{d_1}\big(S(d_2)\cup S_\triangle \big)$ in $O\big(|S_\triangle|(\log m+\log^2 n)\big)$ time.
	
	\item[Propagate:] Propagate $W_{d}\big(S'(d)\big)$ into $P(d)$, provided that $P(d)\cap S=S(d)=\emptyset$,  to build $\SD'\cap P(d)=\VD_P^*(S)\cap P(d)$ in $O\big(\PV_{\mbox{inv}}(\log m+\log^2n)+|\SD'\cap P(d)|\big)$ time.
\end{description}
Merge and Join operations differ in that the former also merges the two corresponding Voronoi diagrams in the underlying triangle, and the latter does not.

Readers could directly read the algorithm in Section~\ref{sec-construction} without knowing the detailed implementation for wavefront operations.
For the operation times, we have three remarks.
\begin{remark}
	During Extend and Propagate operations and at the end of the other operations, 
	potential vertices will be generated according to new incomplete Voronoi edges.
	It will be clear in Section~\ref{sec-construction} that the total number of potential vertices in the construction of $\SD$ and $\SD'$ is $O(m)$,
	so a priority queue takes $O(\log m)$ time for an insertion or an extraction. 
\end{remark}

\begin{remark}
	As stated in Section~\ref{sub-wavefront-symbolic},
	we maintain incomplete Voronoi/polygonal edges symbolically. 
	For the sake of simplicity, we charge the time to update an incomplete edge to the created breakpoints, and assign the corresponding costs to their located triangles. 
\end{remark}

\begin{remark}
	Since a wavelet (resp.~anchor) to remove from a wavefront (resp.~wavelet) must be inserted beforehand,
	we charge its removal cost at its insertion time.
	For a wavelet, the cost is $O(\log m)$, and for an anchor, since the position is always known, 
	the cost is amortized $O(1)$. 
	Similarly, we charge the cost to delete a diagram vertex at its creation time.

\end{remark}

\subsection{Initiate operation}\label{sub-op-initiate}

An Initiate operation processes a triangle $\triangle$ to compute $\VD_\triangle(S_\triangle)$ and initiate $W_{(d,d_2)}(S_\triangle)$.
Since no polygon vertex lies in a triangle, 
each resulting Voronoi cell has only one subcell. 
For the sake of simplicity, we assume that no diagonal of $\triangle$ is a polygon side;
the other cases are similar.

An Initiate operation consists of two simple steps:
First, $\VD_\triangle(S_\triangle)$ is computed by constructing the Voronoi diagram of $S_\triangle$ without considering $P$
and trimming the diagram by $\triangle$. 
Second, 
by tracing $\VD_\triangle(S_\triangle)$ along the pair $(d,d_2)$ of diagonals,
$W_{(d,d_2)}(S_\triangle)$ is built as the aforementioned data structure (Section~\ref{sub-wavefront-DS}).  
During the tracing, if $v_{1,2}$ (resp.~$v_1$) is a reflex polygon vertex, then it will be inserted into the anchor list of its located wavelet (cells) in $W_{(d,d_2)}(S_\triangle)$. Note that if $v_2$ is a reflex polygon vertex, then $v_2$ will be inserted as an anchor after $W_{(d,d_2)}(S_\triangle)$ has been divided or split at $v_2$.

The operation time is $O\big(|S_\triangle|\cdot (\log m+\log^2n)\big)$. 
The first step takes $O(|S_\triangle|\log |S_\triangle|)$ time to construct the Voronoi diagram using
a standard divide-and-conquer or a plane-sweep algorithm \cite[Section~3]{AurenhammerKL13}, and takes $O(|S_\triangle|)$ time to trim the diagram.
The tracing in the second step simply takes $O(|S_\triangle|)$ time by first locating $v_{1,2}$ in $V_\triangle(S_\triangle)$,
and then repeatedly traversing the boundary of the located Voronoi cell to find the next cell along $(d,d_2)$ until reaching $v_1$.
The data structure can also be built from scratch in $O(|S_\triangle|)$ time. 
Since a cell in $\VD_\triangle(S_\triangle)$ has only one subcell, whose anchor is exactly the associated site,
the position to insert $v_1$ or $v_{1,2}$ into a wavelet as an anchor
can be computed in $O(1)$ time,
and although inserting an anchor at a known position takes amortized $O(1)$ time, since the worst-case time to insert an anchor is $O(\log n)$,
the time to insert $v_1$ and $v_{1,2}$ is dominated by the time-factor $O(|S_\triangle|\cdot (\log m+\log^2n)\big)$.
Finally, since there are $O(|S_\triangle|)$ incomplete Voronoi edges, there are $O(|S_\triangle|)$ potential vertices, leading to the time factor $|S_\triangle|\cdot (\log m+\log^2n)$. (Computing a potential vertex takes $O(\log^2 n)$ time (Lemma~\ref{lem-Voronoi-vertex}), locating it in a triangle of $\T$ takes $O(\log n)$ time, and inserting it in the priority queue takes $O(\log m)$ time.)

\subsection{Extend operation}\label{sub-op-extend}

An Extend operation extends a wavefront  $W_{\tilde{d}}(Q)$ from one diagonal $\tilde{d}$ of a triangle $\tilde{\triangle}$ to the opposite pair of diagonals $(\tilde{d}_1, \tilde{d}_2)$  to construct $W_{(\tilde{d}_1, \tilde{d}_2)}(Q)$ and $\VD_P^*(Q)\cap \tilde{\triangle}$,
where $\tilde{d}=\overline{\tilde{v}_1\tilde{v}_2}$, $\tilde{d}_1=\overline{\tilde{v}_1\tilde{v}_{1, 2}}$, and $\tilde{d}_2=\overline{\tilde{v}_2\tilde{v}_{1, 2}}$, and $Q$ lies on the opposite side of $\tilde{d}$ from $\tilde{\triangle}$.

This operation is equivalent to sweeping the triangle with a scan line parallel to $\tilde{d}$
and processing each hit potential vertex. 
The next hit potential vertex is provided from the priority queue associated with $\tilde{d}$ (defined in Section~\ref{sub-wavefront-potential}),
and will be processed in three phases:
First, its validity is verified: for a degree-1 potential vertex, its incomplete Voronoi edge should be alive in the wavefront,
and for a degree-3 one, its two incomplete Voronoi edges should be still adjacent in the wavefront. 
Second, the one or two incomplete Voronoi edges are updated up to the potential vertex. Since a degree-1 potential vertex is incident to a polygonal edge, 
the polygonal edge is also updated.

Finally, 
when a potential vertex becomes a Voronoi vertex,
a wavelet will be removed from the wavefront. 
For a degree-1 potential vertex, since the wavelet lies at one end of the wavefront,
a polygonal edge is added to its adjacent wavelet; for a degree-3 potential vertex, since the removal makes two wavelets adjacent, 
a new incomplete Voronoi edge is created, and one degree-1 and two degree-3 potential vertices are computed accordingly. 

After the extension, if $\tilde{v}_{1, 2}$ is a reflex polygon vertex,
the wavefront is further processed by three cases.
If neither $\tilde{d}_1$ nor $\tilde{d}_2$ is a polygon side,
$\tilde{v}_{1,2}$ will later be inserted as an anchor while dividing or splitting $W_{(\tilde{d}_1, \tilde{d}_2)}(Q)$ at $\tilde{v}_{1,2}$.
If both $\tilde{d}_1$ and  $\tilde{d}_2$ are polygon sides, 
all the subcells will be completed along $(\tilde{d}_1, \tilde{d}_2)$ and the wavefront will disappear. 
If only $\tilde{d}_1$ (resp.~$\tilde{d}_2$)  is a polygon side,
all the subcells along $\tilde{d}_1$ (resp.~$\tilde{d}_2$) excluding the one containing $\tilde{v}_{1,2}$ will be completed,
and $\tilde{v}_{1,2}$ will be inserted into the corresponding wavelet as the last or first anchor.

The operation time is $O\big(\PV_{\mbox{inv}}(\log m+\log^2 n)+I_{\mbox{new}} \big)$ plus amortized $O(1)$,
where $\PV_{\mbox{inv}}$ is the number of involved (i.e., processed and newly created) potential vertices and $I_{\mbox{new}}$ is the number of created diagram vertices.
First, extracting a potential vertex from the priority queue takes $O(\log m)$ time, and verifying its validity takes $O(1)$ time.
Second, updating incomplete Voronoi and polygonal edges takes time linear in the number of created breakpoints (Section~\ref{sub-wavefront-symbolic}), i.e., $O(I_{\mbox{new}})$ time in total.
Third, since computing a potential vertex takes  $O(\log^2 n)$ time, locating it takes $O(\log n)$ time, and inserting it into a priority queue takes $O(\log m)$ time,
creating a new potential vertex takes $O(\log^2n+\log m)$ time. 
Finally, completing subcells takes $O(I_{\mbox{new}})$ time, and inserting $\tilde{v}_{1,2}$ takes amortized $O(1)$ time since its position in the anchor list is known.

\subsection{Merge operation}\label{sub-op-merge}

A Merge operation merges $W_\eta(Q)$ and $W_\eta(Q')$ into $W_\eta(Q\cup Q')$ together with 
$\VD_P^*(Q)\cap \triangle$ and $\VD_P^*(Q')\cap \triangle$ into $\VD_P^*(Q\cup Q')\cap \triangle$ where $\triangle$ is the underlying triangle. 
In our algorithm, either $Q=S_\triangle$, $Q'=S(d_1)$, and $\eta=(d,d_2)$ or $Q=S_\triangle\cup S(d_1)$, $Q'=S(d_2)$, and $\eta=d$; in both cases, $S_\triangle\subseteq Q$.
The border will form on $\partial \triangle \setminus \eta$, i.e., $d_1$ for the former case and $(d_1, d_2)$ for the latter case. 
Although a wavefront only stores incomplete Voronoi cells, its associated diagram can still be accessed through the Voronoi edges of the stored cells. 
After the merge, new incomplete Voronoi edges will form, and their potential vertices will be created.

The Merge operation consists of two phases: (1) merge $\VD_P^*(Q)\cap \triangle$ and $\VD_P^*(Q')\cap \triangle$ into $\VD_P^*(Q\cup Q')\cap \triangle$
and (2) merge $W_\eta(Q)$ and $W_\eta(Q')$ into $W_\eta(Q\cup Q')$.

The first phase is to construct so-called \emph{merge curves}. A merge curve is a connected component consisting of border edges along $\partial \triangle \setminus \eta$ and Voronoi edges in $\VD_P^*(Q\cup Q')\cap \triangle$
associated with one site in $Q$ and one site in $Q'$; the ordering of mergve curves is the ordering of their appearances along $\eta$.
This phase is almost identical to the merge process by Papadopoulou and Lee \cite[Section~5]{PapadopoulouL98},
but since our data structure for a wavefront is different from theirs, 
a binary search along a wavefront to find a starting endpoint for a merge curve requires a different implementation. 
We first state this difference here, and will describe the details of tracing a merge curve in Section~\ref{subsub-op-merge}.

Assume $\eta$ to be oriented from $v_1$ to $v_{1,2}$ for $\eta=(d,d_2)$ and from $v_1$ to $v_2$ for $\eta=d$. 
By \cite[Lemma~4--6]{PapadopoulouL98}, 
a merge curve called \emph{initial} starts from $v_{1,2}$ for the latter case ($\eta=d$),
but all other merge curves have both endpoints on $\eta$, and those endpoints are associated with one site in $S_\triangle$. Let $Q_\triangle$ be the set of sites in $S_\triangle$ that have a wavelet in $W_{\eta}(Q)$.
If $\eta=(d,d_2)$, a site in $Q_\triangle$ can have two wavelets in $W_{\eta}(Q)$,
and with an abuse of terminology, such a site is imagined to have two copies, each to one wavelet. 
Since $Q_\triangle\subseteq Q$, finding a starting endpoint for each merge curve except the initial one is to test sites in $Q_\triangle$ following the ordering of their wavelets in $W_{\eta}(Q)$ along $\eta$. 
After finding a starting endpoint, the corresponding merge curve will be traced; when the tracing reaches $\eta$ again,
a stopping endpoint forms, and the first site in $Q_\triangle$ lying after the site inducing the stopping endpoint will be tested. 

Let $x$ be the next starting endpoint, which is unknown, and let $s$ be the next site in $Q_\triangle$ to test.
A two-level binary search on $W_\eta(Q')$ determines if $s$ induces $x$,
and if so, further determines the site $t\in Q'$ that induces $x$ with $s$ as well as the corresponding anchor.

The first-level binary search executes on the \BST{} for the wavelets in $W_\eta(Q')$,
and each step determines for a site $q\in Q'$ if its cell lies before or after $t$'s cell along $\eta$ or if $q=t$.
Let $y_1$ and $y_2$ denote the two intersection points between $\eta$ and the two incomplete edges of $s$ (in $W_\eta(Q)$), where $y_1$ lies before $y_2$ along $\eta$,
and let $z_1$ and $z_2$ be the two points defined similarly for $q$ (in $W_\eta(Q')$).
Since $s$ lies in $\triangle$, the distance between $s$ and any point in $\eta$ can be computed in $O(1)$ time.
The two incomplete edges of $q$ will be updated until $z_1$ and $z_2$,
so that the distance from $q$ to $z_1$ (resp.~to $z_2$) can be computed from the corresponding anchor.
For example, if $u$ is the anchor of the subcell that contains $z_1$, $d(z_1, q)=|\overline{z_1u}|+d(u,q)$.

The determination considers four cases. (Assume $s$ and $t$ induce the ``starting'' endpoint.)
\begin{itemize}
	\item $z_2$ lies before $y_1$ (resp.~$z_1$ lies after $y_2$): $t$'s cell lies after (resp.~before)  $q$'s cell.
	\item $z_1$ lies before $y_1$ and $z_2$ lies between $y_1$ and $y_2$ (resp.~$z_2$ lies after $y_2$ and $z_1$ lies between $y_1$ and $y_2$):
	if $z_2$ is closer to $q$ than to $s$ (resp.~$z_1$ is closer to $q$ than to $s$), then $t$'s cell lies after (resp.~before) $q$'s cell;
	otherwise, $t$ is $q$. 
	\item Both $y_1$ and $y_2$ lie between $z_1$ and $z_2$: $t$ is $q$.
	\item Both $z_1$ and $z_2$ lie between $y_1$ and $y_2$: let $x_s$ be the projection point of $s$ onto $\eta$.
	\begin{itemize}
		\item If $x_s$ lies before $z_1$, then $t$'s cell lies before $q$'s cell.
		\item If $x_s$ lies after $z_2$: if $z_2$ is closer to $q$ than to $s$, $t$'s cell lies after $q$'s cell;
		if both $z_1$ and $z_2$ are closer to $s$ than to $q$, $t$'s cell lies before $q$'s cell; otherwise, $t=q$.
		\item If $x_s$ lies between $z_1$ and $z_2$: if $z_1$ is closer to $s$ than to $q$, $t$'s cell lies before $q$'s cell; otherwise, $t=q$.
	\end{itemize}
\end{itemize}
If the first-level search does not find $t$, then $s$ does not induce the next starting endpoint $x$.

The second-level binary search executes on the \BST{} for $t$'s anchor list to either determine the next starting endpoint $x$ and $t$'s corresponding anchor 
or report that $s$ does not induce $x$. Let $u$ be the current anchor of $t$ to test,
and let $x_s$ be the projection point of $s$ onto $\eta$.
$u$'s ``interval'' on $\eta$ can be decided by checking $u$'s two neighboring anchors.
If $u$'s interval lies after $x_s$,
$x$ lies before $u$'s interval;
otherwise,
if both endpoints of $u$'s interval are closer to (resp.~farther from) $t$ than to (resp.~from) $s$, 
$x$ lies after (resp.~before) $u$'s interval,
and if one endpoint is closer to $t$ than to $s$ but the other is not, then $x$ lies in $u$'s interval and can be computed in $O(1)$ time since $d(x, s)=|\overline{xu}|+d(u,t)$.
If the second-level binary search does not find such an interval, $s$ does not induce the next starting endpoint $x$.

The second phase (i.e., merging $W_\eta(Q)$ and $W_\eta(Q')$ into $W_\eta(Q\cup Q')$)  splits $W_\eta(Q)$ and $W_\eta(Q')$ at the endpoints of merge curves,
and concatenates active parts at these endpoints where a part is called active if it contributes to $W_\eta(Q\cup Q')$.
In fact, the active parts along $\eta$ alternately come from $W_\eta(Q)$ and $W_\eta(Q')$.
At each merging endpoint, the two cells become adjacent, generating a new incomplete Voronoi edge. 
Potential vertices of these incomplete Voronoi edges will be computed and inserted into the corresponding priority queues. 
For each ending polygon vertex of $\eta$, 
if it is reflex but has not yet been an anchor of $W_\eta(Q\cup Q')$,
it will be inserted into its located wavelet as the first or the last anchor.


The total operation time is $O\big(|S_\triangle|(\log n+\log m)+\PV_{\mbox{new}}(\log m+\log^2 n)+I_{\mbox{new}}\big)$ plus amortized $O(1)$,
where $\PV_{\mbox{new}}$ is the number of created potential vertices and $I_{\mbox{new}}$ is the number of created diagram vertices while merging the two diagrams.
First, since each two-level binary search takes $O(\log m+\log n)$ time, finding starting points takes $O\big(|S_\triangle|(\log m+\log n)\big)$ time. 
Second, by Section~\ref{subsub-op-merge} or by \cite[Section~5]{PapadopoulouL98},
tracing a merge curve takes time linear in the number of deleted and created vertices, but the time to delete vertices has been charged at their creation,
implying that tracing all the merge curves takes $O(I_{\mbox{new}})$ time. 

Third, an incomplete Voronoi edge generates at least one potential vertex,
so the number of new incomplete Voronoi edges is $O(\PV_{\mbox{new}})$.
Since an endpoint of a merge curve corresponds to a new incomplete Voronoi edge,
there are $O(\PV_{\mbox{new}})$ split and $O(\PV_{\mbox{new}})$ concatenation operations, and since each operation takes $O(\log m+\log n)$ time,
it takes $O\big(\PV_{\mbox{new}}(\log m+\log n)\big)$ time to merge the two wavefronts.
By the same analysis in Section~\ref{sub-op-extend},
creating $\PV_{\mbox{new}}$ potential vertices takes $O\big(\PV_{\mbox{new}}(\log m+\log^2 n)\big)$ time.
Finally, inserting an ending polygon vertex of $\eta$ as an anchor takes amortized $O(1)$ time.

\subsubsection{Tracing merge Curves in Merge Operation}\label{subsub-op-merge}

We make some further definitions. The bisector $B(Q,Q')$ between $Q$ and $Q'$ is
the collection of points with the same minimum geodesic distance from both $Q$ and $Q'$, namely 
$B(Q, Q')=\{x\in P\mid \min_{q\in Q} d(x,q) = \min_{q'\in Q'} d(x,q')\}$.
In our algorithm, each anchor $u$ refers to either a reflex polygon vertex of $P$ or a point site in $S$;
we store its associated site $s(u)$, its geodesic distance $w(u)$ from $s(u)$,
and its anchor $a(u)$ with respect to $s(u)$. The \emph{weighted distance} $d_w(x,u)$ from a point $x$ to $u$ 
is $|\overline{xu}|+w(u)$, and the \emph{weighted bisector} between two anchors $u_1$ and $u_2$ is $\{x\in P\mid d_w(x,u_1)=d_w(x, u_2)\}$.

The process to trace a merge curve from the starting endpoint is identical to that presented by Papadopolou and Lee. Note that when a wavelet (incomplete cell) is visited, its incomplete edges will be updated to $\eta$, enabling the tracing between (incomplete) subcells.
The merge curve begins with the weighted bisector between the two anchors that induce the starting endpoint.
When the merge curve changes the underlying subcell in one of the two diagrams,
it continues with the weighted bisector between the new anchor and the other original anchor. 
When the merge curve hits $\eta'$, the tracing continues along $\eta'$ in the direction along which the next point is closer to the current site in $Q$ than to the current site in $Q'$;
until reaching an equidistant point, it turns into the interior of $\triangle$ again following the corresponding weighted bisector.
The merge curve may visit $\eta'$ several times. 
Finally, it reaches $\eta$  at the stopping endpoint. 
Border vertices form on $\eta'$ when the merge curve enters or leaves $\eta$ and when the merge curve changes the underlying subcell in one side of $\eta$.

Since tracing a merge curve takes time proportional to the number of deleted and created vertices but the time to delete vertices has been charged at their creation,
tracing all the merge curves takes $O(I_{\mbox{new}})$ time in total,
where $I_{\mbox{new}}$ is the number of newly created diagram vertices.

\subsection{Join operation}\label{sub-op-join}

A Join operation joins $W_d\big(S(d_3)\cup S_{\triangle'}\big)$ and $W_d\big(S'(d_4)\big)$ into $W_d\big(S'(d)\big)$ with building the border on $d$ but without merging $\VD_P^*\big(S(d_3)\cup S_{\triangle'}\big)\cap \triangle'$ and $\VD_P^*\big(S'(d_4)\big)\cap \triangle'$ into $\VD_P^*\big(S'(d)\big)\cap\triangle'$.
In the construction of $\SD'$ (Section~\ref{sub-preorder}),
our algorithm will actually join $W_{d_1}\big(S(d_2)\cup S_\triangle\big)$ and $W_{d_1}(S'(d))$ into $W_{d_1}(S'(d_1))$, and join $W_{d_2}\big(S(d_1)\cup S_\triangle\big)$ and $W_{d_2}(S'(d))$ into $W_{d_2}(S'(d_2))$.
To interpret a Join operation, for the former case, one would replace $d$, $d_3$, $d_4$, and $\triangle'$ with $d_1$, $d_2$, $d$, and $\triangle$, respectively, 
and for the latter case, one would replace $d$, $d_3$, $d_4$, and $\triangle'$ with $d_2$, $d_1$, $d$, and $\triangle$, respectively.

The reason for not merging $\VD_P^*\big(S(d_3)\cup S_{\triangle'}\big)\cap \triangle'$ and $\VD_P^*\big(S'(d_4)\big)\cap \triangle'$ into $\VD_P^*\big(S'(d)\big)\cap\triangle'$  is that $S(d_3)\cup S_{\triangle'}$ contributes nothing to $\SD'\cap \triangle'=\VD_P^*\big(S'(d_4)\big)\cap \triangle'$.
The border on $d$ while merging $W_d\big(S(d_3)\cup S_{\triangle'}\big)$ and $W_d\big(S'(d_4)\big)$ into $W_d\big(S'(d)\big)$ is the collection of points in $d$ that are closer to $S(d_3)\cup S_{\triangle'}$ than to $S'(d_4)$, namely $\{x\in d\mid \min_{s\in S(d_3)\cup S_{\triangle'}} d(x, s) \leq  \min_{s'\in S'(d_4)}d(x, s')\}$.

At a high level, a Join operation first builds the border on $d$, and then joins the two wavefronts according to the border.
This operation is conceptually identical to the process in \cite[Section~6]{PapadopoulouL98}
except that our data structure of a wavefront is different. 

With a slight abuse of terminology, let $b'(d)$ denote the collection of connected components of border edges on $d$ in $SD'$.\footnote{In \cite[Section~6]{PapadopoulouL98}, the authors define $\sigma'(d)$ as the Voronoi edges in $SD'\cap \triangle$ that are associated with one site in $S'(d_4)$ and one site in $S(d_3)\cup S_{\triangle'}$, but only compute $\sigma'(d)\cap d$, which is exactly $b'(d)$. }
These components in $b'(d)$ are called \emph{joint segments} and are ordered from $v_2$ to $v_1$. 
By \cite[Lemma~14--15]{PapadopoulouL98},
$b'(d)$ satisfies the following conditions:
\begin{itemize}
	\item At most one joint segment starts at $v_2$, and such a joint segment is called \emph{initial}.
	\item Except the initial one, both endpoints of each joint segment are associated with a site in $S_{\triangle'}$. The first endpoint is called \emph{starting},
	and the second endpoint is called \emph{stopping}.
\end{itemize}
The second condition results from the fact that the Voronoi cells associated with $S(d_3)$ and $S(d_4)$ do not ``cross'' each other along $d$,
and this desirable condition enables locating a starting endpoint of a non-initial joint segment by searching $W_d\big(S'(d_4)\big)$ with sites in $S_{\triangle'}$.

Assume that $d$ is oriented from $v_2$ to $v_1$.
Joint segments will be constructed one by one from $v_2$ to $v_1$.
For each joint segment (except the initial one), its starting endpoint is first located on $d$ by searching $W_d\big(S'(d_4)\big)$ and then it is traced from its starting endpoint along $d$ until reaching its stopping endpoint. We remark that the initial joint segment will be directly traced from $v_2$.

Let $S_{\triangle'|d}$ be the set of sites in $S_{\triangle'}$ that have a wavelet in $W_d\big(S(d_3)\cup S_{\triangle'}\big)$. 
By the second condition, 
finding the starting  endpoint of a joint segment is to ``locate'' sites of $S_{\triangle'|d}$ in $W_d\big(S'(d_4)\big)$ since each endpoint of a joint segment (except the initial one)
is associated with a site in $S_{\triangle'|d}$.
Therefore,
sites in $S_{\triangle'|d}$ will be tested following the ordering of their wavelets in $W_d\big(S(d_3)\cup S_{\triangle'}\big)$ from $v_2$ to $v_1$;
after tracing a joint segment, the test will continue on the next site in $S_{\triangle'|d}$, which lies after the site in $S_{\triangle'|d}$ associated the last traced stopping endpoint. 
Note that each site in $S_{\triangle'|d}$ has exactly one wavelet in $W_d\big(S(d_3)\cup S_{\triangle'}\big)$.

Let $x$ be the next starting endpoint, which is unknown, and let $s$ be the site in $S_{\triangle'|d}$ to test.
A two-level binary search on $W_d\big(S'(d_4)\big)$, which is identical to the one in Section~\ref{sub-op-merge}, can determine if $s$ induces $x$,
and if so, determine the site $t\in S'(d_4)$ that induces $x$ with $s$ together with the corresponding anchor.
The total time to find starting points is trivially $O\big(|S_{\triangle'}|(\log n+\log m)\big)$.

The process to trace a joint segment from the starting endpoint is simpler than tracing a merge curve in Section~\ref{sub-op-merge}.
The process walks along $d$ from the starting endpoint. 
Every time when one of the two corresponding anchors in $W_d\big(S(d_3)\cup S_{\triangle'}\big)$
and $W_d\big(S'(d_4)\big)$ changes, a border vertex is created accordingly. The process ends when reaching a point equidistant from the site of its located wavelet in $W_ d\big(S(d_3)\cup S_{\triangle'}\big)$ and the site of its located in $W_d\big(S'(d_4)\big)$; 
this point is a border vertex and also the second endpoint of the joint segment. 
Note that the change of an anchor can be determined by the anchor lists, and the distance from a site to a point can be determined through the corresponding anchor.
The time to trace all the joint segments is proportional to the number of created border vertices,
i.e., $O(I_{\mbox{new}})$ time in total, where $I_{\mbox{new}}$ is the number of newly created diagram vertices.

After constructing joint segments, we split each of $W_d\big(S(d_3)\cup S_{\triangle'}\big)$ and $W_d\big(S'(d_4)\big)$ at the endpoints of these joint segments, and concatenate the active parts at these endpoints into $W_d\big(S'(d)\big)$, where ``active'' means having a wavelet in $W_d\big(S'(d)\big)$.
For each ending polygon vertex of $d$, i.e., $v_1$ or $v_2$,  
if it is reflex but has not been an anchor of $W_d\big(S'(d)\big)$,
it will be inserted into its located wavelet as the first or the last anchor.

The time to update the wavefronts is $O\big(\PV_{\mbox{new}}(\log m+\log^2 n)\big)$ plus amortized $O(1)$, where $K_{\mbox{new}}$ is the number of created potential vertices.
First, an incomplete Voronoi edge generates at least one potential vertex,
so the number of new incomplete Voronoi edges is $O(\PV_{\mbox{new}})$.
Since an endpoint of a joint segment corresponds to a new incomplete Voronoi edge,
there are $O(\PV_{\mbox{new}})$ split and $O(\PV_{\mbox{new}})$ concatenation operations, and since each operation takes $O(\log m+\log n)$ time,
it takes $O\big(\PV_{\mbox{new}}(\log m+\log n)\big)$ time to join the two wavefronts.
Moreover, by the same analysis in Section~\ref{sub-op-extend},
creating $\PV_{\mbox{new}}$ potential vertices takes $O\big(\PV_{\mbox{new}}(\log m+\log^2 n)\big)$ time.
Finally, inserting $v_1$ or $v_2$ as an anchor takes amortized $O(1)$ time.

To conclude, a Joint operation takes  $O\big((|S_{\triangle'}|(\log n+\log m)+\PV_{\mbox{new}}(\log m+\log^2 n)+I_{\mbox{new}}\big)$ plus amortized $O(1)$ time.

\subsection{Split operation}\label{sub-op-split}

A Split operation splits a wavefront associated with a pair of diagonals at the common polygon vertex using a binary search. 
First, the operation locates the wavelet that contains the common polygon vertex and the corresponding anchor. 
Second, the operation splits the wavelet, i.e., the corresponding list of anchors, at the located anchor, and duplicates the located anchor since it appears in both resulting wavelets.
Third, the operation splits the wavefront between the two resulting wavelets. 
Finally, if the common polygon vertex is reflex, the operation inserts it into both of the duplicate wavelets as an anchor. 

The operation time is $O(\log n+\log m)$.
Since a wavefront has $O(m)$ wavelets and a wavelet has $O(n)$ anchors, 
both locating the common polygon vertex and spliting the wavelet and wavefront take $O(\log n+\log m)$ time.
Although inserting the common polygon vertex takes amortized $O(1)$ time, since the worst-case time to insert an anchor is $O(\log n)$,
the time to insert the common polygon vertex is dominated by the time-factor $O(\log n+\log m)$.
Since there is no new site, there is no new incomplete Voronoi edge and no new potential vertex.

\subsection{Divide operation}\label{sub-op-divide}

A Divide operation divides a wavefront associated with a pair of diagonals at the common polygon vertex by traversing one diagonal instead of using a binary search. 
Although a Divide operation seems a brute-force way compared to a Split operation,
since a Split operation takes $\Omega(\log n + \log m)$ time 
and there are $\Omega(n)$ events to separate a wavefront, 
if only Split operations are adopted,
the total construction time would be $\Omega\big(n(\log n+\log m)\big)$.

First, the wavefront is traversed from the end of the selected diagonal subcell by subcell, i.e., anchor by anchor, until reaching the common vertex.
Then, the wavefront is separated at the common polygon vertex by removing all the visited anchors except the last one,
duplicating the last one, 
and building a new wavefront for these ``removed'' anchors and the duplicate anchor from scratch.
Finally, if the common polygon vertex is reflex, it is inserted into its located wavelets in both resultant wavefronts as an anchor without a binary search (since it is the first or last anchor of its located wavelets).

The total operation time is amortized $O(A_{\mbox{vis}}+1)$, 
where $A_{\mbox{vis}}$ is the number of visited anchors.
Since each wavelet (resp.~anchor) records its two neighboring wavelets (resp.~anchors) in the augmented \BST{}, the time to locate the common polygon vertex is $O(A_{\mbox{vis}})$. Recall that a cell must have one subcell, and the time to remove a wavelet or an anchor has been charged when it was inserted. Finally,
building the new wavefront from scratch takes amortized $O(A_{\mbox{vis}})$ time,
and inserting the common vertex takes amortized $O(1)$ time.

\subsection{Insert operation}\label{sub-op-insert}

For a triangle $\triangle$, an Insert operation inserts $S_\triangle$ into $W_{d_1}\big(S(d_2)\big)$ to form $W_{d_1}\big(S(d_2)\cup S_\triangle\big)$.
An Insert operation executes in the construction of $\SD$, but its outcome will be used to construct $\SD'$. 
Let $T$ denote the intermediate site set during the Insert operation, so that $T=S(d_2)$ at the beginning. 

For each site $s\in S_\triangle$, let $x_s$ be its vertical projection point onto $d_1$, and conduct a binary search on the wavefront $W_{d_1}(T)$ to locate the anchor whose subcell contains $x_s$. During the binary search, the two incomplete Voronoi/polygonal edges of each visited wavelet will be updated up to $d_1$. 
Recall that the \emph{weighted distance} between a point $y$ and an anchor $u$ associated with a site $t$ 
is $|\overline{yu}|+d(u,t)$.

If $x_s$ is closer to $s$ than to the located anchor under the weighted distance,
traverse anchors in $W_{d_1}(T)$ from $x_s$ along each direction of $d_1$ until either touching a polygon vertex of $d_1$ or reaching a point equidistant from $s$ and the current visited anchor under the weighted distance.
If all points in the ``interval'' of a visited anchor on $d_1$ are fully closer to $s$ than to the anchor, remove the anchor from its associated wavelet,
and if all the anchors of a wavelet have been removed, remove the wavelet from $W_{d_1}(T)$.
After the traversal, insert $s$ into $W_{d_1}(T)$, namely create and insert its wavelet into  $W_{d_1}(T)$.
Since $S(d_2)$ and $S_\triangle$ lie on the same side of $d_1$ and all sites in $S(d_2)$ lie outside $\triangle$,
each cell of a site in $S(d_2)$ will not be separated by a cell of a site in $S_\triangle$ ``along $d_1$'', supporting the correctness of the above process.

After processing all the sites in $S_\triangle$, check the two polygon vertices, $v_1$ and $v_{1,2}$, of $d_1$, and if $v_1$ (resp.~$v_{1,2}$) should belong to a wavelet associated a site in $S_\triangle$,
insert $v_1$ (resp.~$v_{1,2}$) into the wavelet as the first or last anchor. For each inserted wavelet, generate its incomplete Voronoi or polygonal edges, and compute the corresponding potential vertices.

The total operation time is $O(|S_\triangle|(\log m+\log^2n))$.
First, since each binary search takes $O(\log n+\log m)$ time, the total time for binary searches is $O\big(|S_\triangle|(\log n + \log m)\big)$.
Second, inserting a wavelet takes $O(\log m)$ time, and the time to remove anchors and wavelets has been already charged at their insertion,
leading to $O(|S_\triangle|\log m)$ time.
Third, although inserting a polygon vertex at a known position takes amortized $O(1)$ time, 
since $d_1$ has only two polygon vertices, the amortized time to insert them is dominated by the time-factor $|S_\triangle|\log m$. Note that an Insertion operation occurs only if $S_\triangle\neq \emptyset$. 
Finally, since there are at most $|S_\triangle|$ new wavelets in $W_{d_1}\big(S(d_2)\cup S_\triangle\big)$,
there are $O(|S_\triangle|)$ new incomplete Voronoi edges,
implying that the time to create new potential vertices is $O(|S_\triangle|(\log m+\log^2n))$.

\subsection{Propagate operation}\label{sub-op-propagate}


A Propagate operation propagates a wavefront $W_{d}\big(S'(d)\big)$ into $P(d)$, i.e., from the upside to the downside of $d$,
to build $\SD'\cap P(d)$ provided that $S\cap P(d)=S(d)=\emptyset$.
Since $S(d)=\emptyset$, then $\SD'\cap P(d)$ is exactly $\VD^*_P(S)\cap P(d)$.
To some extent, a Propagate operation is a generalized version of an Extend operation underlying a sub-polygon instead of a triangle.

The operation consists of two phases. The first phase constructs $\VD_P(S)\cap P(d)$,
and the second phase refines each cell into subcells to obtain $\VD^*_P(S)\cap P(d)$.

The first phase ``sweeps'' the triangles in $P(d)$ by a preorder traversal of the subtree of $\T$ rooted at $\triangle(d)$. 
Similar to the Extend operation,
this sweep processes potential vertices inside each triangle to construct Voronoi edges and to update the wavefront accordingly.
However, this sweep will not process the polygon vertices of $P(d)$, 
so that the anchor lists will not be updated, preventing constructing a Voronoi edge using the two corresponding anchor lists. 
Fortunately,
Oh and Ahn~\cite{OhA17} gave another technique that obtains in $O(\log n)$ time the two anchor lists of a Voronoi edge provided that the two Voronoi vertices are given, 
so a Voronoi edge can still be built in time proportional to $\log n$ plus the number of its breakpoints.

Therefore, the first phase takes $O\big(\PV_{d}(\log m+\log^2 n)+|\VD_P(S)\cap P(d)|\big)$ time,
where $\PV_{d}$ is the number of involved potential vertices.
Note that for the Voronoi edges that intersect $d$, their breakpoints outside $P(d)$ will be counted in the respective triangles in $P'(d)$.

The second phase triangulates each cell in $\VD_P(S)\cap P(d)$ and constructs the SPM from the associated site in each triangulated cell. 
Since Chazelle's algorithm~\cite{Chazelle91} takes linear time to triangulate a simple polygon
and Guibas et al's algorithm~\cite{GuibasHLST87} takes linear time to build the SPM in a triangulated polygon,
the second phase takes $O(|\SD'\cap P(d)|)$ time.

\begin{remark}
Although all the sites lie outside $P(d)$, 
the information stored in $W_{d}\big(S'(d)\big)$
allows us not to conduct Guibas et al's algorithm from scratch.
For example, for each anchor $u$, its anchor  $a(u)$ is also stored, 
and the SPM edge of $u$ is the line segment from $u$ to the polygon boundary along the direction from $a(u)$ to $u$.
\end{remark}

To sum up, a Propagate operation takes $O\big(\PV_{d}(\log m+\log^2 n)+|\SD'\cap P(d)|\big)$ time.

\section{Subdivision construction}\label{sec-construction}

\subsection{Construction of $\SD$}\label{sub-postorder}

To construct $\SD$, we process each triangle $\triangle$ by the postorder traversal of the rooted partition tree $\T$ and build $\SD\cap \triangle$.
We first assume that no diagonal of $\triangle$ is a polygon side, and we will discuss the other cases later. 
Let $d$ be the root diagonal of $\triangle$ and adopt the convention in Section~\ref{sub-pre-triangulation} and Section~\ref{sub-pre-SD}.
When processing $\triangle$, since its two children, $\triangle(d_1)$ and $\triangle(d_2)$, have been processed,
$W_{d_1}\big(S(d_1)\big)$ and $W_{d_2}\big(S(d_2)\big)$ are available.

The processing of each triangle $\triangle$ consists of 8 steps:
\begin{enumerate}
	\item Initiate $\VD_\triangle(S_\triangle)$ and $W_{(d,d_2)}(S_\triangle)$.
	\item Extend $W_{d_1}\big(S(d_1)\big)$ into $\triangle$ to generate $W_{(d,d_2)}\big(S(d_1)\big)$ and construct $\VD_P^*\big(S(d_1)\big)\cap \triangle$. 
	\item Merge $W_{(d,d_2)}(S_\triangle)$ and $W_{(d,d_2)}\big(S(d_1)\big)$ into $W_{(d,d_2)}\big(S(d_1)\cup S_\triangle\big)$ by which $\VD_\triangle(S_\triangle)$ and $\VD_P^*\big(S(d_1)\big)\cap \triangle$ are merged into $\VD_P^*\big(S(d_1)\cup S_\triangle\big)\cap \triangle$.
	\item Divide $W_{(d,d_2)}\big(S(d_1)\cup S_\triangle \big)$ into $W_{d}\big(S(d_1)\cup S_\triangle\big)$ and $W_{d_2}\big(S(d_1)\cup S_\triangle\big)$ along  $d_2$. 
	\item Extend $W_{d_2}\big(S(d_2)\big)$ into $\triangle$ to generate $W_{(d_1, d)}\big(S(d_2)\big)$ and construct $\VD_P^*\big(S(d_2)\big)\cap \triangle$. 
	\item Divide $W_{(d_1, d)}\big(S(d_2)\big)$ into $W_{d_1}\big(S(d_2)\big)$ and $W_{d}\big(S(d_2)\big)$ along $d_1$. 
	\item Insert $S_\triangle$  into $W_{d_1}\big(S(d_2)\big)$ to obtain $W_{d_1}\big(S(d_2)\cup S_\triangle \big)$. 
	\item Merge $W_{d}\big(S(d_1)\cup S_\triangle\big)$ and $W_{d}\big(S(d_2)\big)$ into $W_{d}\big(S(d)\big)$ by which $\VD_P^*\big(S(d_1)\cup S_\triangle\big)\cap \triangle$ and  $\VD_P^*\big(S(d_2)\big)\cap \triangle$ are merged into $\VD_P^*\big(S(d)\big)\cap \triangle=\SD\cap \triangle$. 
\end{enumerate}
We remark that $W_{d_1}\big(S(d_2)\cup S_\triangle \big)$ and $W_{d_2}\big(S(d_1)\cup S_\triangle\big)$ will be used to construct $\SD'$.

If exactly one diagonal $d$ of $\triangle$ is not a polygon side, it is either the root triangle $\blacktriangle$ or a leaf triangle. 
For the former, compute $\VD_\blacktriangle(S_\blacktriangle)$, extend $W_{d}(S(d))$ into $\blacktriangle$ to build $\VD_P^*(S(d))\cap \blacktriangle$,
and merge $\VD_\blacktriangle(S_\blacktriangle)$ and $\VD_P^*(S(d))\cap \blacktriangle$ into $\VD_P^*(S)\cap \blacktriangle=\SD\cap \blacktriangle$;
for the latter, compute $\VD_{\triangle}(S_\triangle)$ and initiate $W_{d}(S_\triangle)$.
If exactly two diagonals, $d$ and $d'$, of $\triangle$ are not polygon sides, where $d$ is the root diagonal,
then compute $\VD_\triangle(S_\triangle)$ to initiate $W_{d}(S_\triangle)$ and $W_{d'}(S_\triangle)$,
extend $W_{d'}\big(S(d')\big)$ into $\triangle$ to obtain $W_{d}\big(S(d')\big)$ and $\VD_P^*\big(S(d')\big)\cap \triangle$, and merge $W_{d}(S_\triangle)$ and $W_{d}\big(S(d')\big)$ to obtain $W_{d}\big(S(d)\big)$ and $\VD_P^*\big(S(d)\big)\cap\triangle=\SD\cap \triangle$. 

By the operation times in Section~\ref{sec-operation},
the time to process $\triangle$ is summarized as follows. 

\begin{lemma}\label{lem-time-triangle-post}
	It takes $O\big((|S_\triangle|+\PV_\triangle)(\log m+\log^2n)+I_\triangle\big)$ plus amortized $O(A_\triangle+1)$ time to process $\triangle$,
	where $\PV_\triangle$ is the number of involved potential vertices,
	$I_\triangle$ is the number of created diagram vertices,
	and $A_\triangle$ is the number of visited anchors in Steps~4 and 6.
\end{lemma}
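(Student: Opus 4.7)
The plan is to prove the lemma by summing the running times of the individual wavefront operations used in the eight-step processing of $\triangle$, as already catalogued in Section~\ref{sec-operation}. Each step is one of Initiate, Extend, Merge, Divide, or Insert, and the lemma is essentially a bookkeeping statement combining their bounds. I will first handle the generic case where no diagonal of $\triangle$ is a polygon side (all eight steps apply) and then argue that the degenerate cases of a root or leaf triangle only use a strict subset of the operations and therefore fall under the same bound.

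First, I would list the contributions step by step. Steps~1, 3, 7 and the Initiate/Merge/Insert parts contribute an $|S_\triangle|(\log m + \log^2 n)$ factor directly: Initiate is $O(|S_\triangle|(\log m+\log^2n))$; Insert is $O(|S_\triangle|(\log m+\log^2n))$; each Merge contributes $O(|S_\triangle|(\log m+\log n))$, absorbed into the same term. Steps~2, 3, 5 and 8 (the two Extends and two Merges) each contribute $O(\PV_{\mbox{inv}}(\log m+\log^2 n)+I_{\mbox{new}})$ plus amortized $O(1)$; grouping the $\PV_{\mbox{inv}}$ counts gives $O(\PV_\triangle(\log m+\log^2 n))$ and grouping the $I_{\mbox{new}}$ counts gives $O(I_\triangle)$, since every involved potential vertex and every created diagram vertex in these four operations is counted once toward $\PV_\triangle$ and $I_\triangle$ respectively. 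The two Divide operations (Steps~4 and 6) contribute amortized $O(A_\triangle+1)$ by summing their individual amortized $O(A_{\mbox{vis}}+1)$ bounds, which matches the definition of $A_\triangle$ as the number of anchors visited in Steps~4 and 6.

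Summing these four buckets yields the claimed $O\big((|S_\triangle|+\PV_\triangle)(\log m+\log^2 n) + I_\triangle\big)$ worst-case term plus amortized $O(A_\triangle+1)$; the finitely many amortized $O(1)$ terms from Extend and Merge are absorbed into the amortized $O(A_\triangle+1)$ slack. For the degenerate triangles, I would observe that each uses a strict subset of the above operations (a root triangle $\blacktriangle$ uses only an Initiate, an Extend, and a Merge; a leaf uses only an Initiate; a triangle with one non-root non-polygon diagonal uses Initiate, Extend, and Merge), and the same accounting gives an upper bound of the same form.

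There is no real obstacle here: the proof is a pure summation, and the only thing to be careful about is making sure no potential vertex or created diagram vertex is double-counted across the overlapping steps. I would therefore argue explicitly that the potential vertices involved in Steps~2, 3, 5, 8 are disjoint (each operation involves only its newly extracted vertices or vertices newly created at its end), so that their sum is exactly $\PV_\triangle$, and similarly for $I_\triangle$. Once this disjointness is pointed out, the lemma follows immediately from the operation time bounds stated in Section~\ref{sec-operation}.
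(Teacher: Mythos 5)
Your proof is correct and takes essentially the same approach as the paper: the paper's own proof is exactly the same step-by-step bookkeeping, summing the operation bounds from Section~\ref{sec-operation} (Steps 1, 7 give the $|S_\triangle|(\log m+\log^2 n)$ term; Steps 2, 3, 5, 8 give the $\PV_\triangle(\log m+\log^2 n)+I_\triangle$ term plus amortized $O(1)$; Steps 4, 6 give amortized $O(A_\triangle+1)$). Your extra discussion of disjointness is not strictly needed — with a constant number of steps, even a factor-$8$ overcount is absorbed by the big-$O$ — and your treatment of the degenerate (root/leaf/one-diagonal) triangles is a small completeness addition that the paper's proof leaves implicit.
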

\begin{proof}
	Step~1 takes $O\big(|S_\triangle|(\log m+\log^2n)\big)$ time;
	Step~2 and Step~5 take $O\big(\PV_\triangle(\log m+\log^2n)\big)+I_\triangle)$ plus amortized $O(1)$ time; 
	Step~3 and Step~8 take $O\big(|S_\triangle|(\log n+\log m)+\PV_\triangle(\log m+\log^2n)+I_\triangle\big)$ plus amortized $O(1)$ time; Step~7 takes $O\big(|S_\triangle|(\log m+\log^2 n)\big)$ time; Step~4 and Step~6 take amortized $O(A_\triangle+1)$ time.
\end{proof}

To apply Lemma~\ref{lem-time-triangle-post}, we bound $\sum_\triangle \PV_\triangle$,  $\sum_\triangle I_\triangle$  and $\sum_\triangle A_\triangle$ by the following two lemmas.

\begin{lemma}\label{lem-number-potential-post}
	In the construction of $\SD$, $\sum_\triangle \PV_\triangle=O(m)$ and $\sum_\triangle I_\triangle=O(n+m)$. 
\end{lemma}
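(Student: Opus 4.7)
The plan is to charge both sums to a common quantity: the total number of \emph{incomplete Voronoi edges} ever created during the construction of $\SD$. The key reduction is that every potential vertex is attached to an incomplete Voronoi edge---a degree-1 one is its free endpoint, and a degree-3 one is the intersection of two adjacent incomplete edges---and each creation of an incomplete Voronoi edge triggers $O(1)$ potential-vertex computations (Sections~\ref{sub-op-extend} and~\ref{sub-op-merge}). Hence bounding $\sum_\triangle \PV_\triangle$ reduces to bounding the total number of incomplete Voronoi edges ever created during the $\SD$-construction.

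I would bound the incomplete Voronoi edges by inspecting their only three sources. The Initiate step of triangle $\triangle$ produces $O(|S_\triangle|)$ incomplete edges, summing to $O(m)$ since $\sum_\triangle |S_\triangle| = m$. Each degree-3 Voronoi vertex finalized in an Extend step removes two incomplete edges and introduces one new one; the ones surviving into $\SD$ number $O(m)$, while the transient ones are destroyed by later Merges (treated below). Finally, each endpoint of a merge curve in a Merge step creates one new incomplete edge; by the structural property recalled in Section~\ref{sub-op-merge} (citing Papadopoulou and Lee~\cite{PapadopoulouL98}), every non-initial merge curve has both endpoints associated with a site of $S_\triangle$, so each triangle produces $O(|S_\triangle|)$ merge curves, again summing to $O(m)$ overall. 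Adding the three contributions yields $\sum_\triangle \PV_\triangle = O(m)$.

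For $\sum_\triangle I_\triangle$, the diagram vertices split into Voronoi vertices, breakpoints, border vertices, and polygon-vertex anchors. Vertices that persist in $\SD$ are covered by $|\SD| = O(n+m)$ (Aronov~\cite{Aronov1989} and Papadopoulou--Lee~\cite{PapadopoulouL98}). Breakpoints are created only by the symbolic update of incomplete edges (Section~\ref{sub-wavefront-symbolic}), so their total number is bounded by the breakpoint complexity $O(n)$ of $\SD$. Transient diagram vertices---those created inside an intermediate diagram $\VD_P^*(S(d_i)) \cap \triangle$ and then overwritten by a subsequent Merge---can be charged to pieces of merge curves, whose total complexity is $O(n+m)$ in the construction of $\SD$~\cite{PapadopoulouL98}. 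Combining these bounds gives $\sum_\triangle I_\triangle = O(n+m)$.

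The main obstacle in this plan is the book-keeping of transient vertices: a naive tally of Voronoi vertices created across the intermediate diagrams could exceed the complexity of $\SD$, because some of them are overwritten when two wavefronts are merged. The cleanest remedy is to charge each such destroyed vertex to a specific segment of a merge curve, invoking the $O(n+m)$ bound on total merge-curve complexity established in~\cite{PapadopoulouL98}; this amortization closes both sums simultaneously and prevents an unwanted $\log$ or $n$ factor from leaking into the final running time.
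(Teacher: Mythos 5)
Your approach is genuinely different from the paper's, and it does not quite close. The paper's proof is short and global: it lists the six intermediate diagram types produced per triangle (steps~1,~2,~3,~5,~7,~8), observes that each type, unioned over all triangles and completed with borders as in $\SD$, is a planar subdivision of $P$ in which each site owns at most one cell. Euler's formula then gives $O(m)$ Voronoi edges per subdivision (hence $O(m)$ incomplete edges and potential vertices overall), and the argument of \cite[Lemma~3]{PapadopoulouL98} gives $O(n+m)$ vertices per subdivision (hence $O(n+m)$ created diagram vertices overall). Your charging scheme tries to avoid reasoning about these six global subdivisions, and this is where the gaps appear.

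For $\sum_\triangle\PV_\triangle$, the Initiate and Merge contributions are fine, but the Extend contribution is not actually bounded by what you wrote. You observe that a degree-3 event ``removes two incomplete edges and introduces one new one,'' which is the right fact, but the sentence ``the ones surviving into $\SD$ number $O(m)$, while the transient ones are destroyed by later Merges'' does not bound the number of Extend-created edges. To make this accounting close, you would need to say explicitly: let $I,E,M$ be the edges created by Initiate, Extend, Merge, and $D$ the edges destroyed; since every edge is destroyed at most once, $D\le I+E+M$, and since Extend degree-3 events alone destroy $2E$ edges, $2E\le I+E+M$, so $E\le I+M=O(m)$. Without that step the amortization is incomplete.

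For $\sum_\triangle I_\triangle$, there are two real problems. First, the claim that the breakpoints created are ``bounded by the breakpoint complexity $O(n)$ of $\SD$'' is false: breakpoints created while updating incomplete edges of the intermediate diagrams $\VD_\triangle(S_\triangle)$, $\VD_P^*(S(d_1))\cap\triangle$, etc., can be deleted by a subsequent Merge and hence never appear in $\SD$. Second, the phrase ``pieces of merge curves, whose total complexity is $O(n+m)$'' invokes a quantity that the paper does not establish as a standalone bound, and even if it held it would bound the vertices \emph{created} on merge curves, not the transient vertices a merge curve \emph{destroys}; charging destroyed vertices to the merge curve that deletes them leaves the bound circular. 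The fix is exactly the paper's observation: every created vertex---breakpoint, Voronoi vertex, border vertex, or polygon vertex---is a vertex of one of the six subdivisions (each $O(n+m)$ in size), so the total is $O(n+m)$ without any charging to merge curves.
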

\begin{proof}
	There are 6 intermediate diagrams:
	$\VD_\triangle(S_\triangle)$ (step~1), $\VD_P^*\big(S(d_1)\big)\cap\triangle$ (step~2),
	$\VD_P^*\big(S(d_1)\cup S_\triangle\big)\cap \triangle$ (step~3), $\VD_P^*\big(S(d_2)\big)\cap\triangle$ (step~5), 
	$\VD_P^*\big(S(d_2)\cup S_\triangle\big)\cap\triangle$ (step~7),
	and  $\VD_P^*\big(S(d)\big)\cap \triangle=\SD\cap \triangle$ (step~8).
	First, a potential vertex arises due to the formation of an incomplete Voronoi edge, and an incomplete Voronoi edge 
	generates $O(1)$ potential vertices. 
	For the first diagram, the total number of Voronoi edges is $O(\sum_\triangle |S_\triangle|)=O(|S|)=O(m)$.
	For each of the other 5 diagrams, we can define borders in a similar way to $\SD$ and thus obtain a subdivision of $P$. 
	Since each site has at most one cell in each resultant subdivision,
	Euler's formula implies that each subdivision contains $O(m)$ Voronoi edges among cells, leading to the conclusion that $\sum_\triangle \PV_\triangle=O(m)$. 
	Second, a created diagram vertex must be a vertex of the first diagram or the other 5 subdivisions. 
	Since there are $m$ sites, the first diagram results in $O(m)$ vertices for all the triangles,
	and by the same reasoning of \cite[Lemma 3]{PapadopoulouL98},
	each subdivision has $O(n+m)$ vertice, 
	leading to that $\sum_\triangle I_\triangle=O(n+m)$.
\end{proof}

\begin{lemma}\label{lem-anchor-post}
	In the construction of $\SD$, $\sum_\triangle A_\triangle=O(n+m)$. 
\end{lemma}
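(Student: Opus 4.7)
The plan is to bound $\sum_\triangle A_\triangle$ by showing that each anchor instance is visited by at most one Divide operation during the entire $\SD$ construction, and then bounding the total number of anchor instances ever created by $O(n+m)$ using the intermediate-diagram complexity already analyzed in Lemma~\ref{lem-number-potential-post}.

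For the first part (the \emph{one-visit invariant}), I would examine the data flow of Steps~1--8 in the processing of a triangle $\triangle$. Divides occur only in Steps~4 and~6. In Step~4, the visited anchors on the $d_2$-side of the pair $(d,d_2)$ are transferred, anchor by anchor, into the newly built wavefront $W_{d_2}(S(d_1)\cup S_\triangle)$, which is stored for later use in the $\SD'$ construction and is not touched by any further $\SD$ operation. Symmetrically, in Step~6 the visited anchors migrate into $W_{d_1}(S(d_2))$, which is subsequently subjected only to the Step~7 Insert (a sequence of binary searches, not a Divide) before being stored. The non-visited anchors remain in $W_d(S(d_1)\cup S_\triangle)$ or $W_d(S(d_2))$ and propagate up the tree via Step~8 and Step~2 of the parent; these anchors may later encounter another Divide in an ancestor's processing, but still only at most once, because once visited they likewise move into a stored wavefront.

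For the second part (the \emph{anchor budget}), I would observe that each anchor instance corresponds to a subcell of one of the six intermediate diagrams per triangle listed in the proof of Lemma~\ref{lem-number-potential-post}. Since each subcell is bordered by $O(1)$ diagram vertices and the total count of diagram vertices is $\sum_\triangle I_\triangle=O(n+m)$, the total number of subcells---and therefore the total number of anchor instances ever created---is $O(n+m)$. Combining this with the one-visit invariant yields $\sum_\triangle A_\triangle=O(n+m)$. The main obstacle is a careful verification of the one-visit invariant: one must check every wavefront operation in the post-order sweep and confirm that anchors in the stored wavefronts $W_{d_2}(S(d_1)\cup S_\triangle)$ and $W_{d_1}(S(d_2)\cup S_\triangle)$ are genuinely left untouched by the remainder of $\SD$'s construction, so that no anchor is counted twice in $\sum_\triangle A_\triangle$.
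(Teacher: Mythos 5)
The central claim in your proposal, the \emph{one-visit invariant}, is false, and this is where your argument diverges from the paper's. You assert that ``once visited they likewise move into a stored wavefront,'' but this overlooks a key detail of the Divide operation: the last visited anchor---the one whose subcell contains the common polygon vertex $v_2$---is \emph{duplicated}, with one copy placed in the new stored wavefront $W_{d_2}(\cdot)$ and the other copy \emph{retained} in the active wavefront $W_d(\cdot)$. That retained copy propagates up the tree via Step~8 and the parent's Step~2, and can absolutely be visited again by a Divide in the parent (or an ancestor), and again in the grandparent, and so on. So the invariant as stated does not hold, and you give no argument to control how many times such a surviving anchor can be revisited.

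The paper's proof does not attempt a one-visit invariant. It instead observes that $O(n+m)$ visits are ``terminal'' (the anchor is removed and never again involved in constructing $\SD\cap P'(d)$), and separately bounds the \emph{non-terminal} visits: a visited anchor can still be involved only if its subcell intersects both $d_2$ (since it was visited) and $d$ (to be relevant above $d$). The paper then does a geometric case analysis on such subcells---those containing $v_2$ contribute $O(1)$ per triangle, hence $O(n)$; those not containing $v_2$ are charged to $S_\triangle$ (giving $\sum_\triangle |S_\triangle| = m$) or to the at most one cell of a site in $S(d_1)$ crossing both diagonals (giving $O(n)$). This case analysis is the essential content of the lemma and is entirely absent from your write-up. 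A secondary issue: your derivation of the $O(n+m)$ anchor budget via ``each subcell is bordered by $O(1)$ diagram vertices'' is itself shaky, since a subcell's boundary along a Voronoi edge or polygonal edge can carry many breakpoints contributed by the opposite side; the paper simply invokes the $O(n+m)$ complexity of the intermediate subdivisions directly.
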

\begin{proof}
	We consider Step~4 (divide $W_{(d,d_2)}\big(S(d_1)\cup S_\triangle \big)$ along $d_2$),
	which is similar to Step~6. Since there are $O(n+m)$ anchors,
	it is sufficient to bound the number of anchors that are visited by Step~4 but still involved in the future construction of $\SD$, namely $\SD\cap P'(d)$.
	Since the subcell of each visited anchor intersects $d_2$,
	if the subcell does not intersect $d$,
	its anchor will not be involved in constructing $\SD\cap P'(d)$.
	A subcell of a visited anchor intersects $d$ in two cases.
	In the first case,
	the subcell contains $v_2$  and thus intersects both $d$ and $d_2$. 
	Since there are $O(n)$ triangles, 
	the total number for the first case is $O(n)$.
	In the second case, the subcell intersects both $d$ and $d_2$ but does not contain $v_2$.
	By the definition of $W_{(d,d_2)}\big(S(d_1)\cup S_\triangle \big)$, its associated ``site'' belongs to either $S_\triangle$ or $S(d_1)$. 
	For the former, since its anchor must be the site itself, 
	the total number is $\sum_\triangle S_\triangle=m$. 
	For the latter, since all the sites in $S(d_1)$ lie outside $\triangle$,
	only one site in $S(d_1)$ can own a cell intersecting both $d$ and $d_2$. 
	Moreover, due to the visibility of a subcell,
	only one subcell in such a cell can intersect both $d$ and $d_2$.
	Since there are $O(n)$ triangles, the total number is $O(n)$.
\end{proof}

By Lemma~\ref{lem-time-triangle-post}, ~\ref{lem-number-potential-post}, and ~\ref{lem-anchor-post}, we conclude the construction time of $\SD$ as follows.

\begin{theorem}\label{thm-time-post}
	$\SD$ can be constructed in $O\big(n+m(\log m+\log^2 n)\big)$ time.
\end{theorem}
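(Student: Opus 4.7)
The plan is to prove Theorem~\ref{thm-time-post} by directly summing the per-triangle bound from Lemma~\ref{lem-time-triangle-post} over all triangles of the rooted partition tree $\T$, and then applying Lemma~\ref{lem-number-potential-post} and Lemma~\ref{lem-anchor-post} to control the aggregate totals. Since there are $O(n)$ triangles in $\T$, the only real task is to verify that each of the four distinct contributions in Lemma~\ref{lem-time-triangle-post} aggregates within the claimed bound.

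First I would dispose of the $|S_\triangle|$ contribution: since the sets $S_\triangle$ partition $S$, we have $\sum_\triangle |S_\triangle| = m$, so $\sum_\triangle |S_\triangle|(\log m + \log^2 n) = O(m(\log m + \log^2 n))$. Next, the potential-vertex contribution is bounded as $\sum_\triangle \PV_\triangle(\log m + \log^2 n) = O(m(\log m + \log^2 n))$ by the first half of Lemma~\ref{lem-number-potential-post}. The diagram-vertex contribution $\sum_\triangle I_\triangle$ is $O(n+m)$ by the second half of Lemma~\ref{lem-number-potential-post}. Finally, the amortized term $\sum_\triangle O(A_\triangle + 1)$ resolves to $O(n+m)$ using Lemma~\ref{lem-anchor-post} together with the fact that $\T$ has $O(n)$ triangles.

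Adding these four sums gives $O(m(\log m + \log^2 n) + (n+m)) = O(n + m(\log m + \log^2 n))$, which is exactly the statement of the theorem. There is no real obstacle: all the technical work has already been absorbed into the three preceding lemmas (the per-triangle accounting, the global Euler-type bound on potential and diagram vertices, and the anchor-visitation bound via the visibility/one-cell-per-site argument in Lemma~\ref{lem-anchor-post}), so the proof is a one-line summation once those inputs are in hand.
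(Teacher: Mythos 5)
Your proof is correct and takes essentially the same route as the paper: sum the per-triangle bound of Lemma~\ref{lem-time-triangle-post} over all $O(n)$ triangles, then use $\sum_\triangle |S_\triangle| = m$ together with Lemma~\ref{lem-number-potential-post} and Lemma~\ref{lem-anchor-post} to bound the aggregate terms. Nothing to add.
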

\begin{proof}
	By Lemma~\ref{lem-time-triangle-post}, we need to bound $\sum_\triangle \big((|S_\triangle|+\PV_\triangle)\cdot (\log m+\log^2n)+I_\triangle+A_\triangle+1\big)$. 
	It is trivial that $\sum_\triangle |S_\triangle|=|S|=m$.
	By Lemma~\ref{lem-number-potential-post} and Lemma~\ref{lem-anchor-post}, 
	$\sum_\triangle \PV_\triangle=O(m)$, $\sum_\triangle I_\triangle=O(n+m)$, and $\sum_\triangle A_\triangle=O(n+m)$,
	leading to the statement.
\end{proof}

\subsection{Construction of $\SD'$}\label{sub-preorder}

To construct $\SD'$, we processes each triangle by the preorder traversal of the partition tree $\T$.
For the root triangle $\blacktriangle$, 
let $\tilde{d}$ be its diagonal that is not a polygon side, 
build $W_{\tilde{d}}(S_\blacktriangle)=W_{\tilde{d}}\big(S'\big(\tilde{d}))$,
and if $S_\blacktriangle=S$, further propagate $W_{\tilde{d}}(S_\blacktriangle)$ into $P(\tilde{d})$.
For other triangles $\triangle$, 
we assume that neither $\triangle$ nor its parent $\triangle'$ has a polygon side; 
the other cases can be processed in a similar way.
Since $\triangle'$ has been processed,
$W_d\big(S'(d)\big)$ is available,
and by the construction of $\SD$, $W_{d_2}\big(S(d_1)\cup S_\triangle\big)$ and  $W_{d_1}\big(S(d_2)\cup S_\triangle\big)$ have been generated.

If $S(d)\neq \emptyset$, $\triangle$ is processed by the following 4 steps:
\begin{enumerate}
	\item Extend $W_d\big(S'(d)\big)$ into $\triangle$ to obtain
	$W_{(d_1, d_2)}(S'(d))$ and $\VD_P^*\big(S'(d)\big)\cap \triangle=\SD'\cap\triangle$.
	\item  If neither $S(d_1)$ nor $S(d_2)$ is empty, 
	split $W_{(d_1, d_2)}(S'(d))$ into $W_{d_1}(S'(d))$ and $W_{d_2}(S'(d))$;
	       otherwise, if $S(d_1)$ (resp.~$S(d_2)$) is empty, 
	       divide $W_{(d_1, d_2)}(S'(d))$ into $W_{d_1}(S'(d))$ and $W_{d_2}(S'(d))$ along $d_1$ (resp.~$d_2$).
	\item Join $W_{d_1}\big(S(d_2)\cup S_\triangle\big)$ and $W_{d_1}(S'(d))$ into $W_{d_1}(S'(d_1))$, and join  $W_{d_2}\big(S(d_1)\cup S_\triangle\big)$ and $W_{d_2}(S'(d))$ into $W_{d_2}(S'(d_2))$.
	\item If $S(d_1)$ (resp.~$S(d_2)$) is empty, 
	propagate $W_{d_1}(S'(d_1))$ (resp.~$W_{d_2}(S'(d_2))$) into $P(d_1)$ (resp.~$P(d_2)$) to build $\SD'\cap P(d_1)$ (resp.~$\SD'\cap P(d_2)$).
	
\end{enumerate}

We first analyze Split and Propagate operations.

\begin{lemma}\label{lem-split-pre}
	The total number of Split operations is $O(m)$.
\end{lemma}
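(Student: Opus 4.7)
The plan is to reduce the count of Split operations to a standard branching-node argument on the partition tree $\T$. Recall from Step~2 of Section~\ref{sub-preorder} that a Split operation is invoked at a triangle $\triangle$ precisely when both $S(d_1)\neq\emptyset$ and $S(d_2)\neq\emptyset$. Since $S(d_i)$ consists of all sites lying in the sub-polygon $P(d_i)$, which is the union of the triangles in the subtree of $\T$ rooted at the child $\triangle(d_i)$, this condition is equivalent to saying that \emph{both} child subtrees of $\triangle$ in $\T$ contain at least one site of $S$.

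First I would call a triangle \emph{interesting} if its subtree in $\T$ contains at least one site, and let $\T_I$ denote the subtree of $\T$ formed by the interesting triangles. Because $\T_I$ is closed under taking ancestors, it is a rooted subtree at $\blacktriangle$, and every node of $\T_I$ has $0$, $1$, or $2$ children in $\T_I$. By the equivalence above, the Split-triangles are exactly the nodes of $\T_I$ with two children in $\T_I$.

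Next I would bound the number of leaves of $\T_I$. A leaf $\triangle$ of $\T_I$ is an interesting triangle such that neither child's subtree contains a site, which forces $S_\triangle \neq \emptyset$; since the sets $\{S_\triangle\}_\triangle$ partition $S$, the number of leaves of $\T_I$ is at most $|\{\triangle : S_\triangle \neq \emptyset\}| \leq |S| = m$. Finally, applying the elementary identity for rooted binary trees (obtained by equating $|V|-1$ with the edge count $I_1 + 2 I_2$, where $I_1, I_2$ are the numbers of nodes with one and two children and $L$ is the number of leaves, giving $I_2 = L-1$), I conclude that the number of Split-triangles equals $L-1 \leq m-1 = O(m)$.

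The argument is essentially a one-line counting step once the right viewpoint is adopted, so I do not anticipate a genuine obstacle; the only care needed is in checking that a leaf of $\T_I$ must itself contain a site (rather than inheriting interestingness from a descendant), which is immediate from the definition.
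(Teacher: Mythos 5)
Your proof is correct and follows essentially the same approach as the paper's: the paper recursively prunes leaves of $\T$ that contain no site (yielding exactly your tree $\T_I$ of "interesting" triangles), observes that Split operations occur only at branching nodes of the pruned tree, and bounds branching nodes by the at most $m$ leaves. You merely make the branching-nodes-versus-leaves identity $I_2 = L-1$ explicit where the paper states it as an $O(m)$ bound.
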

\begin{proof}
	A Split operation occurs only if neither $S(d_1)$ nor $S(d_2)$ is empty. 
	We recursively remove leaf nodes (triangles) containing no site from  $\T$, so that each leaf node in the resulting tree $\T'$ contains a site.
	Then a Split operation occurs in an internal node of $\T'$ with two children.
	Since there are $m$ sites, $\T'$ has $O(m)$ leaf nodes,
	and since $\T'$ is a binary tree, the number of internal nodes with two children is $O(m)$, 
	leading to $O(m)$ Split operations.
\end{proof}

\begin{lemma}\label{lem-propagate-pre}
The total time for all the Propagate operations is $O\big(n+m(\log m+\log^2 n)\big)$.
\end{lemma}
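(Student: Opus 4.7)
My plan is to bound the total running time by analyzing the two terms in the per-operation cost of a Propagate operation (as stated at the end of Section~\ref{sub-op-propagate}): the $\PV_d(\log m+\log^2 n)$ term and the $|\SD'\cap P(d)|$ term. Let $\mathcal{D}$ be the collection of root diagonals $d$ on which a Propagate operation is executed during the construction of $\SD'$. I would then show separately that $\sum_{d\in\mathcal{D}} |\SD'\cap P(d)| = O(n+m)$ and $\sum_{d\in\mathcal{D}} \PV_d = O(m)$, which combine to give the claimed $O\big(n+m(\log m+\log^2 n)\big)$ bound.

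For the first sum, I would argue that the sub-polygons $\{P(d)\}_{d\in\mathcal{D}}$ are pairwise interior-disjoint. A Propagate into $P(d)$ is triggered exactly when $S(d)=\emptyset$, and the operation immediately completes $\SD'\cap P(d)$ in its entirety. Consequently, the preorder traversal of $\T$ does not re-enter any descendant triangle of $\triangle(d)$ to issue another Propagate operation inside $P(d)$: any diagonal $d'$ strictly inside $P(d)$ has $\triangle(d')\subseteq P(d)$ and is therefore already resolved. Hence $\sum_{d\in\mathcal{D}} |\SD'\cap P(d)| \leq |\SD'| = O(n+m)$.

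For the second sum, I would mirror the amortized argument in the proof of Lemma~\ref{lem-number-potential-post}. Each potential vertex processed by (or created during) a Propagate operation is born from an incomplete Voronoi edge that belongs to one of a constant number of intermediate global subdivisions of $P$ (such as $\VD_P^*\big(S'(d)\big)$ restricted to the union of already-swept triangles, extended with borders in the natural way). Each such subdivision has every site owning at most one cell, so Euler's formula gives $O(m)$ Voronoi edges among different cells, and each incomplete edge contributes $O(1)$ potential vertices. Summing across the $O(1)$ subdivisions that arise throughout the preorder construction yields $\sum_{d\in\mathcal{D}} \PV_d = O(m)$. Once these two bounds are in place, plugging them into the Propagate cost gives $O\big(m(\log m+\log^2 n) + (n+m)\big) = O\big(n+m(\log m+\log^2 n)\big)$.

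The main obstacle is the second bound: the pairwise disjointness argument is essentially a structural observation about $\T$, but bounding $\sum_d \PV_d$ requires carefully identifying a global subdivision of $P$ that hosts every potential vertex ever charged to a Propagate, so that Euler's formula applies uniformly. One has to take care that potential vertices inherited from the incoming wavefront $W_d\big(S'(d)\big)$ at the start of a Propagate are not double-counted across operations, which is ensured by noting that each potential vertex is extracted from its priority queue at most once and is thus charged to at most one Propagate or Extend step.
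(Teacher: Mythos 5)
Your proof is correct and follows essentially the same approach as the paper's: the paper bounds the $I$ term by noting that the propagated sub-polygons are disjoint (so $I = O(|\SD'|) = O(n+m)$) and bounds the $\PV$ term by invoking the same reasoning as Lemma~\ref{lem-number-potential-post}. Your version simply spells out both steps in more detail, including the structural reason for disjointness and the Euler-formula argument, but no new idea is added or needed.
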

\begin{proof}
	For a sub-polygon to propagate, since each edge of the resultant subdivision has a vertex in the sub-polygon,
	the number of created edges is bounded by the number of created vertices.
	Therefore, 	by Section~\ref{sub-op-propagate}, the total time is $O\big(\PV(\log m+\log^2n)+I\big)$,
	where $\PV$ is the number of involved potential vertices and $I$ is the number of created diagram vertices.
    Moreover, by the same reasoning of Lemma~\ref{lem-number-potential-post}, $\PV=O(m)$,
	and since all the sub-polygons to propagate are disjoint, $I=O(|\SD'|)=O(m+n)$, leading to the statement.
\end{proof}

By Lemma~\ref{lem-split-pre}, Lemma~\ref{lem-propagate-pre} and the reasoning of Theorem~\ref{thm-time-post},
we conclude the construction time of $\SD'$ as follows.

\begin{theorem}\label{thm-time-pre}
	$\SD'$ can be constructed in $O\big(n+m(\log m+\log^2 n)\big)$ time.
\end{theorem}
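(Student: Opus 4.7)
The plan is to mirror the proof of Theorem~\ref{thm-time-post}: first establish a per-triangle bound analogous to Lemma~\ref{lem-time-triangle-post} for the four-step processing used in the preorder traversal, then sum over all triangles of $\T$ using the counting lemmas already available and the two new Lemmas~\ref{lem-split-pre} and \ref{lem-propagate-pre}.

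For the per-triangle bound, I would read off the cost of each of the four steps from Section~\ref{sec-operation}. Step~1 (Extend) costs $O(\PV_\triangle(\log m+\log^2 n)+I_\triangle)$ plus amortized $O(1)$. Step~2 is either a Split, costing $O(\log m+\log n)$, or a Divide, costing amortized $O(A_\triangle+1)$, depending on whether $S(d_1)$ and $S(d_2)$ are both nonempty. Step~3 consists of two Join operations, each costing $O(|S_\triangle|(\log m+\log n)+\PV_\triangle(\log m+\log^2 n)+I_\triangle)$ plus amortized $O(1)$. Step~4 is a Propagate, whose total cost is handled separately via Lemma~\ref{lem-propagate-pre}. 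Thus processing $\triangle$ (excluding the Propagate contribution) costs $O\bigl((|S_\triangle|+\PV_\triangle)(\log m+\log^2 n)+I_\triangle\bigr)$ plus the Split/Divide cost.

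Next I would aggregate. The bound $\sum_\triangle |S_\triangle|=m$ is trivial. The bound $\sum_\triangle \PV_\triangle=O(m)$ follows by the same Euler-formula reasoning as in Lemma~\ref{lem-number-potential-post}: each intermediate wavefront $W_\eta(S'(\cdot))$ corresponds to a subdivision of $P$ in which each site owns at most one cell, so the total number of Voronoi edges across all such subdivisions is $O(m)$, and each incomplete Voronoi edge spawns only $O(1)$ potential vertices. The bound $\sum_\triangle I_\triangle=O(n+m)$ follows from the same subdivision count together with the fact that each $\SD'\cap \triangle$ contributes to the $O(n+m)$ total size of $\SD'$. The Split contribution sums to $O(m(\log m+\log n))$ by Lemma~\ref{lem-split-pre}, and the Propagate contribution is absorbed by Lemma~\ref{lem-propagate-pre}.

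The only subtle point, and what I expect to be the main obstacle, is bounding $\sum_\triangle A_\triangle$ for the Divide operations in Step~2, since a Divide is only invoked when one of $S(d_1), S(d_2)$ is empty. I would handle this by a charging argument in the spirit of Lemma~\ref{lem-anchor-post}: when $W_{(d_1,d_2)}(S'(d))$ is divided along (say) $d_1$ because $S(d_1)=\emptyset$, the anchors visited while walking from $v_1$ to $v_{1,2}$ become exactly the anchors of the wavefront $W_{d_1}(S'(d_1))$ that is subsequently propagated into $P(d_1)$; each such anchor contributes an SPM edge in the resulting $\SD'\cap P(d_1)$, so $A_\triangle$ is dominated by $|\SD'\cap P(d_1)|$. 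Summing over all such Divide operations, and using that the target sub-polygons $P(d_1)$ (resp.\ $P(d_2)$) are pairwise disjoint, gives $\sum_\triangle A_\triangle=O(|\SD'|)=O(n+m)$. Combining all contributions yields the stated $O(n+m(\log m+\log^2 n))$ running time.
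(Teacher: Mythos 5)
Your proposal mirrors the paper's proof closely: same per-triangle decomposition into the four steps, same aggregation via $\sum_\triangle |S_\triangle|=m$, $\sum_\triangle \PV_\triangle=O(m)$, $\sum_\triangle I_\triangle=O(n+m)$ (via the reasoning of Lemma~\ref{lem-number-potential-post}), and the same use of Lemmas~\ref{lem-split-pre} and \ref{lem-propagate-pre} for the Split and Propagate contributions.

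The one place where you supply a genuinely new argument is the bound $\sum_\triangle A_\triangle=O(n+m)$ for the Divide cost, where the paper instead appeals to ``the reasoning of Lemma~\ref{lem-anchor-post}'' together with the observation that the wavefront along the traversed diagonal will immediately propagate (Step~4) and hence never separate again. Your direct charging scheme---charge each anchor visited while walking along $d_1$ to $\SD'\cap P(d_1)$ and use disjointness of the propagation sub-polygons---gets to the same place, but it is stated a bit too strongly in two respects. First, the visited anchors become the anchors of $W_{d_1}(S'(d))$, not of $W_{d_1}(S'(d_1))$; the latter is only obtained after the Join in Step~3, during which some of those anchors (and indeed entire wavelets) may be deleted, so ``become exactly the anchors of $W_{d_1}(S'(d_1))$'' is not literally true. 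Second, an anchor that is eaten in the Join does not contribute anything to $\SD'\cap P(d_1)$, so it cannot be charged there. Both issues are harmless---a deleted anchor is visited at most once and its removal cost was charged at insertion, and there are only $O(n+m)$ anchors in total---but they mean the correct version of your argument is really ``each visited anchor either dies in the Join (paid for at insertion) or survives and is charged to the disjoint sub-polygons.'' Once you add that case split, your route is a clean alternative to the paper's indirect appeal to Lemma~\ref{lem-anchor-post}; what the paper's phrasing buys is that it automatically handles the $O(1)$ boundary anchors at $v_{1,2}$ that get duplicated into both halves, which your phrasing glosses over but which again only contributes $O(n)$ in total.
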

\begin{proof}
	The analysis for Extend operations (Step~1) is identical to that in the construction of $\SD$,
	and the analysis for Join operations (Step~3) is similar to the analysis for Merge operations in the construction of $\SD$,
	both of which yield $O\big(n+m(\log m+\log^2 n)\big)$ time in total.
	For Split operations (Step~2),
	by Lemma~\ref{lem-split-pre}, there are $O(m)$ Split operations in total, 
	and by Appendix~\ref{sub-op-split}, each Split operation take $O(\log m+\log n)$ time, leading to $O\big(m(\log m+\log n)\big)$ time in total.
	For a Divide operation (Step~2), although each of the two resultant wavefronts could be joined with another wavefront in Step~3,
	since the wavefront along the traversed diagonal will propagate into the corresponding sub-polygon in Step~4 and will not separate anymore,
	the reasoning of Lemma~\ref{lem-anchor-post} implies that the total time for all the Divide operations is $O(n+m)$.  
	Finally,
	by Lemma~\ref{lem-propagate-pre}, the total time for all the Propagate operations (Step~4) is $O\big(n+m(\log m+\log^2 n)\big)$,
	leading to the statement.
\end{proof}

\newpage
\bibliographystyle{plain}
\bibliography{p-58-liu}

\newpage

 \end{document}